\newcommand{\set}[1]{\left\{#1\right\}}
\newcommand{\pr}[1]{\left(#1\right)}
\newcommand{\fpr}[1]{\mathopen{}\left(#1\right)}
\newcommand{\abs}[1]{{\left|#1\right|}}
\newcommand{\enset}[2]{\left\{#1 ,\ldots , #2\right\}}
\newcommand{\enpr}[2]{\pr{#1 ,\ldots , #2}}
\newcommand{\real}{\mathbb{R}}
\newcommand{\define}{\leftarrow}
\newcommand{\dispfunc}[2]{%
  \ensuremath{%
  \ifthenelse{\equal{\noexpand#2}{}}%
    {{#1}}%
    {{#1}\fpr{#2}}}}
\newcommand{\score}[1]{sc\fpr{#1}}
\newcommand{\np}{\textbf{NP}}
\newcommand{\link}[1]{\mathit{link}\fpr{#1}}
\newcommand{\tb}[1]{\mathit{st}\fpr{#1}}
\newcommand{\kl}[1]{\mathit{KL}\fpr{#1}}
\newcommand{\fm}[1]{\mathit{fam}\fpr{#1}}
\newcommand{\dgr}[1]{d\fpr{#1}}
\newcommand{\sdgr}[1]{\mathit{sd}\fpr{#1}}
\newcommand{\mdgr}[1]{\mathit{md}\fpr{#1}}
\newcommand{\rf}[1]{\mathit{rf}\fpr{#1}}
\newcommand{\gain}[1]{\mathit{gain}\fpr{#1}}
\newcommand{\mods}[1]{\mathcal{#1}}
\newcommand{\splt}[1]{\textsc{Split}\fpr{#1}}
\newcommand{\merge}[1]{\textsc{Merge}\fpr{#1}}
\newcommand{\ifam}[1]{\mathcal{#1}}
\newcommand{\tree}[1]{\mathcal{#1}}
\newcommand{\freq}[1]{\mathit{fr}\fpr{#1}}
\newcommand{\ent}[1]{H\fpr{#1}}
\newtheorem{theorem}{Theorem}
\newtheorem{lemma}[theorem]{Lemma}
\newtheorem{proposition}[theorem]{Proposition}
\newtheorem{corollary}[theorem]{Corollary}
\newtheorem{example}[theorem]{Example}
\begin{document}

\title{Probably the Best Itemsets}
\numberofauthors{1}
\author{
\alignauthor
Nikolaj Tatti\\
	\affaddr{ADReM, University of Antwerp, Antwerpen, Belgium} \\
	\email{nikolaj.tatti@ua.ac.be}
}

\maketitle
\begin{abstract}
One of the main current challenges in itemset mining is to discover
a small set of high-quality itemsets.
In this paper we propose a new and general approach for measuring the quality
of itemsets. The method is solidly founded in Bayesian statistics and
decreases monotonically, allowing for efficient discovery of all interesting
itemsets. The measure is defined by connecting statistical models and
collections of itemsets. This allows us to score individual itemsets with the
probability of them occuring in random models built on the data.

As a concrete example of this framework we use exponential models. This class
of models possesses many desirable properties. Most importantly, Occam's razor
in Bayesian model selection provides a defence for the pattern explosion. As
general exponential models are infeasible in practice, we use decomposable
models; a large sub-class for which the measure is solvable. For the actual
computation of the score we sample models from the posterior distribution using
an MCMC approach.

Experimentation on our method demonstrates the measure works in practice and
results in interpretable and insightful itemsets for both synthetic and
real-world data.
\end{abstract}

\category{H.2.8}{Database management}{Database applications---Data mining}
\category{G.3}{Probability and Statistics}{Markov processes}

\terms{Algorithms, Theory}

\keywords{Itemset mining, exponential models, decomposable models, junction trees, Bayesian model selection, MCMC}

\vfill\eject 
\section{Introduction}
Discovering frequent itemsets is one of the most active fields in data mining.
As a measure of quality, frequency possesses a lot of positive properties: it is
easy to interpret and as it decreases monotonically there exist 
efficient algorithms for discovering large collections of frequent
itemsets~\cite{agrawal:96:fast}.  However, frequency also has
serious drawbacks. A frequent itemset may be uninteresting if its elevated
frequency is caused by frequent singletons.  On the
other hand, some non-frequent itemsets could be interesting.
Another drawback is the problem of pattern explosion when mining with a
low threshold.

Many different quality measures have been suggested to overcome the mentioned
problems (see Section~\ref{sec:related} for a more detailed discussion). Usually these
measures compare the observed frequency to some expected value derived,
for example, from the independence model. Using such measures we may obtain better
results. However, these approaches still suffer from pattern explosion. To
point out the problem, assume that two items, say $a$ and $b$ are correlated,
and hence are considered significant. Then any itemset containing $a$ and $b$
will be also considered significant.

\begin{example}
\label{ex:problem}
Assume a dataset with $K$ items $a_1, \ldots, a_K$ such that $a_1$ and $a_2$
always yield identical value and the rest of the items are independently
distributed. Assume that we apply some statistical method to evaluate the
significance of itemsets by using the independence model as the ground truth.  If an
itemset $X$ contains $a_1a_2$, its frequency will be higher than then the
estimate of the independence model. Hence, given enough data, the P-value of
the statistical test will go to $0$, and we will conclude that the itemset $X$
is interesting. Consequently we will find $2^{K - 2}$ interesting itemsets.
\end{example}

In this work we approach the problem of defining quality measure from a
novel point of view. We construct a connection between itemsets and
statistical models and use this connection to define a new quality measure for
itemsets. To motivate this approach further, let us consider the following
example.

\begin{example}
\label{ex:toy}
Consider a binary dataset with $5$ items, say $a_1, \ldots, a_5$, generated
from the independence model. We argue that if we know that the data comes from
the independence model, then the only interesting itemsets are the singletons.
The reasoning behind this claim is that the frequencies of singletons
correspond exactly to the column margins, the parameters of the independence
model.  Once we know the singleton frequencies, there is nothing left in the
data that would be statistically interesting.

Let us consider a more complicated example. Assume that data is generated from
a Chow-Liu tree model~\cite{chow:68:approximating}, say 
\[
p(A) = p(a_1)p(a_2 \mid a_1)p(a_3 \mid a_1) p(a_4 \mid a_2) p(a_5 \mid a_4).
\]
Again, if we know that
data is generated from this model, then we argue that the interesting itemsets
are $a_1$, $a_2$, $a_3$, $a_4$, $a_5$, $a_1a_2$, $a_1a_3$, $a_2a_4$, and
$a_4a_5$. The reasoning here is the same as with the independence model. If we
know the frequencies of these itemsets we can derive the parameters of the
distribution. For example, $p(a_2 = 1 \mid a_1 = 1) = \freq{a_1a_2} /
\freq{a_1}$.
\end{example}

Let us now demonstrate that this approach will produce much smaller and
more meaningful output than the method given in Example~\ref{ex:problem}.

\begin{example}
Consider the data given in Example~\ref{ex:problem}. To fully describe the data
we only need to know the frequencies of the singletons and the fact that $a_1$ and
$a_2$ are identical. This information can be expressed by outputting the
frequencies of singleton itemsets and the frequency of itemset $a_1a_2$.
This will give us $K + 1$ interesting patterns in total.
\end{example}

Our approach is to extend the idea pitched in the preceding example to a
general itemset mining framework. In the example we knew which model generated
the data, in practice, we typically do not. To solve this we will use the Bayesian
approach, and instead of considering just one specific model, we will consider a
large collection of models, namely exponential models.  A virtue of these
models is that we can naturally connect each model to certain itemsets.  A
model $M$ has a posterior probability $P(M \mid D)$, that is, how probable is
the model given the data.  The score of a single itemset then is just the
probability of it being a parameter of a random model given the data. This
setup fits perfectly the given example. If we have strong evidence that data is
coming from the independence model, say $M$, then the posterior probability
$P(M \mid D)$ will be close to $1$, and the posterior probability of any other
model will be close to $0$. Since the independence model is connected to the
singletons, the score for singletons will be $1$ and the score for any other
itemset will be close to $0$.

Interestingly, using statistical models for defining significant itemsets
provides an approach to the problem of the pattern set explosion (see
Section~\ref{sec:compute} for more technical details). Bayesian model selection
has an in-built Occam's razor, favoring simple models over complex ones.  Our
connection between models and itemsets is such that simple models will
correspond to small collections of itemsets.  In result, only a small collection
of itemsets will be considered interesting, unless the data provides sufficient
amount of evidence.

Our contribution in the paper is two-fold. First, we introduce a general framework of
using statistical models for scoring itemsets in Section~\ref{sec:connect}.
Secondly, we provide an example of this framework in Section~\ref{sec:model} by
using exponential models and provide solid theoretical evidence that our
choices are well-founded. We provide the sampling algorithm in Section~\ref{sec:algorithm}.
We discuss related work in
Section~\ref{sec:related} and present our experiments in
Section~\ref{sec:experiments}. Finally, we conclude our work with
Section~\ref{sec:conclusions}. The proofs are given in Appendix. The implementation is
provided for research purposes\footnote{\url{http://adrem.ua.ac.be/implementations}}.

\section{Significance of Itemsets by Statistical Models}
\label{sec:connect}
As we discussed in the introduction, our goal is to define a quality
measure for itemsets using statistical models. In this section we provide a
general framework for such a score. We will define the actual models in the
next section.

We begin with some preliminary definitions and notations. In our setup a
\emph{binary dataset} is a collection of $N$ transactions, binary vectors of length $K$.  We
assume that these vectors are independently generated from some unknown
distribution.  Such a dataset can be easily represented by a binary matrix of size
$N \times K$.  By an attribute $a_i$ we mean a Bernoulli random variable
corresponding to the $i$th column of the data.  We denote the set of
\emph{attributes} by $A = \enset{a_1}{a_K}$.

An itemset $X$ is simply a subset of $A$. Given an itemset $X =
\enset{a_{i_1}}{a_{i_L}}$ and a transaction $t$ we denote by $t_X =
\enpr{t_{i_1}}{t_{i_L}}$ the projection of $t$ into $X$. We say that $t$ \emph{covers}
$X$ if all elements in $t_X$ are equal to 1.

We say that a collection of itemsets
$\ifam{F}$ is \emph{downward closed} if for each member $X \in \ifam{F}$ any
sub-itemset is also included. This property plays a crucial point in mining
frequent patterns since it allows effective candidate pruning in level-wise
approach and branch pruning in a DFS approach.

Our next step is to define the correspondence between statistical models and
families of itemsets. Assume that we have a set $\mods{M}$ of statistical
models for the data. We will discuss in later sections what specific models we
are interested in, but for the moment, we will keep our discussion on a high level.
Each model $M \in \mods{M}$ has a posterior probability $P(M \mid D)$, that is,
how probable the model $M$ is given data $D$. To link the models to
families of itemsets we assume that we have a function $\mathit{fam}$ that identifies a
model $M$ with a \emph{downward closed} family of itemsets. As we will see later on,
there is one particular natural choice for such a function. 

Now that we have our models that are connected to certain families of
itemsets, we are ready to define a score for individual itemsets. The score for an itemset
$X$ is the posterior probability of $X$ being a member in a family or itemsets,
\begin{equation}
\label{eq:defscore}
	\score{X} = \sum_{X \in \fm{M}, \atop M \in \mods{M}} P(M \mid D).
\end{equation}

The motivation for such score is as follows. If we are sure that some
particular model $M$ is the correct model for $D$, then the posterior
probability for that model will be close to $1$ and the posterior probabilities
for other models will be close to $0$. Consequently, the score for an itemset
$X$ will be close to $1$ if $X \in \fm{M}$, and $0$ otherwise.

Naturally, the pivotal choice of this score lies in the mapping
$\mathit{fam}$. Such a mapping needs to be statistically well-founded, and especially the
size of an itemset family should be reflected in the complexity of the
corresponding model.  We will see in the following section that a particular
choice for the model family and mapping $\mathit{fam}$ has these properties, and leads
to certain important qualities.

\begin{proposition}
The score decreases monotonically, that is, $X \subseteq Y$ implies $\score{X} \geq \score{Y}$.
\end{proposition}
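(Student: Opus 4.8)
The plan is to unwind the definition of $\score{X}$ in \eqref{eq:defscore} and exploit the one structural hypothesis we were handed about $\mathit{fam}$: that $\fm{M}$ is always a \emph{downward closed} family of itemsets. First I would fix two itemsets with $X \subseteq Y$ and compare the two sums $\score{X} = \sum_{M : Y \in \fm{M}} P(M \mid D)$ and $\score{Y} = \sum_{M : Y \in \fm{M}} P(M \mid D)$ — wait, more carefully: $\score{X}$ sums $P(M \mid D)$ over those $M$ with $X \in \fm{M}$, and $\score{Y}$ sums over those $M$ with $Y \in \fm{M}$.

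The key observation is that the index set of the second sum is contained in the index set of the first. Indeed, suppose $M$ is a model with $Y \in \fm{M}$. Since $\fm{M}$ is downward closed and $X \subseteq Y$, the itemset $X$ is a sub-itemset of a member of $\fm{M}$, hence $X \in \fm{M}$ as well. Therefore
\[
	\set{M \in \mods{M} : Y \in \fm{M}} \subseteq \set{M \in \mods{M} : X \in \fm{M}}.
\]

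From here the proof is immediate: each term $P(M \mid D)$ is a probability and hence non-negative, so enlarging the index set of a sum of such terms can only increase (or keep equal) the total. Thus
\[
	\score{Y} = \sum_{M : Y \in \fm{M}} P(M \mid D) \;\le\; \sum_{M : X \in \fm{M}} P(M \mid D) = \score{X},
\]
which is exactly the claimed monotonicity. (If $\mods{M}$ is infinite one simply notes the sums are over non-negative terms so the comparison of sub-sums still holds, with the convention that divergent sums are $+\infty$; in the intended setting the total mass $\sum_M P(M\mid D)$ is $1$, so everything is finite.)

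There is essentially no obstacle here — the statement is a one-line consequence once the downward-closedness of $\fm{M}$ is invoked. The only thing to be careful about is not to confuse the two directions: it is precisely because $\mathit{fam}$ outputs downward closed families that the \emph{larger} itemset $Y$ lives in \emph{fewer} (a subset of the) models, giving it the \emph{smaller} score. I would therefore emphasize in the write-up that this monotonicity is "built in" by the design choice that $\mathit{fam}$ always returns a downward closed family, and does not depend on any particular properties of the posterior $P(\cdot \mid D)$ beyond non-negativity.
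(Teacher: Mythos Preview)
Your argument is correct and is essentially the same as the paper's own proof: both invoke the downward closedness of $\fm{M}$ to conclude that the index set $\{M : Y \in \fm{M}\}$ is contained in $\{M : X \in \fm{M}\}$, and then use non-negativity of $P(M \mid D)$ to compare the two sums. Your write-up is simply a slightly more explicit version of the paper's one-line justification.
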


\begin{proof}
We are allowing $\mathit{fam}$ only to map on downward closed families. Hence the inequality
\[
	\sum_{X \in \fm{M}, \atop M \in \mods{M}} P(M \mid D) \geq \sum_{Y \in \fm{M}, \atop M \in \mods{M}} P(M \mid D)
\]
holds, and consequently we have $\score{X} \geq \score{Y}$.
This completes the proof.
\end{proof}

\section{Exponential Models}
\label{sec:model}
In this section we will make our framework more concrete by providing a
specific set $\mods{M}$ of statistical models and the function $\mathit{fam}$
identifying the models with families of itemsets. We will first give the
definition of the models and the mapping. After this, we point out the main
properties of our model and justify our choices. However, as, it turns out that
computing the score for these models is infeasible, so instead, we solve this
problem by considering decomposable models.

\subsection{Definition of the models}

Models of exponential form have been studied exhaustively in statistics, and
have been shown to have good theoretical and practical properties.  In our
case, using exponential models provide a natural way of describing the
dependencies between the variables. In fact, the exponential model class
contains many natural models such as, the independence model, the Chow-Liu tree model,
and the discrete Gaussian model. Finally, such models have been used successfully for
predicting itemset frequencies~\cite{pavlov:03:beyond} and ranking
itemsets~\cite{tatti:08:maximum}.

In order to define our model, let $\ifam{F}$ be a downward closed family of
itemsets containing all singletons. For an itemset $X \in \ifam{F}$ we define
an indicator function $S_X(t)$ mapping a transaction $t$ into binary value. If
the transaction $t$ covers $X$, then $S_X(t) = 1$, and $0$ otherwise. We define
an exponential model $M$ associated with $\ifam{F}$ to be the collection of 
distributions having the exponential form

\[
	P(A = t \mid M, r) = \exp\fpr{\sum_{X \in \ifam{F}} r_X S_X(t)},
\]
where $r_X$ is a parameter, a real value, for an itemset $X$. Model $M$ also
contains all the distributions that can be obtained as a limit of the
distribution having the exponential form. This technicality is needed to handle
distributions with zero probabilities.  Since the indicator function
$S_\emptyset(t)$ is equal to $1$ for any $t$, the parameter
$r_\emptyset$ acts like a normalization constant. The rest of the parameters
form a parameter vector $r$ of length $\abs{\ifam{F}} - 1$.  Naturally, we set
$\ifam{F} = \fm{M}$. 

\begin{example}
Assume that $\ifam{F}$ consists only of singleton itemsets. Then the corresponding
model has the form
\begin{equation}
\label{eq:expform}
	\exp\fpr{r_\emptyset + \sum_{i = 1}^K r_{a_i} S_{a_i}(t)} = \exp\fpr{r_\emptyset}\prod_i^K \exp\fpr{r_{a_i}S_{a_i}(t)}.
\end{equation}
Since $S_{a_i}(t)$ depends only on $t_i$, the model is actually the independence model.
The other extreme is when $\ifam{F}$ consists of all itemsets. Then we can
show that the corresponding model contains all possible distributions, that is,
the model is in fact the parameter-free model.

As an intermediate example, the tree model in Example~\ref{ex:toy} is also an
exponential model with a corresponding family $\enset{a_1}{a_5} \cup
\set{a_1a_2, a_2a_3, a_2a_4, a_4a_5}$.

The intuition behind the model is that when an itemset $X$ is an element of $\ifam{F}$, then
the dependencies between the items in $X$ are considered important in the
corresponding model. For example, if $\ifam{F}$ consists only of singletons,
then there are no important correlations, hence the corresponding model should
be the independence model. On the other hand, in the tree model given in
Example~\ref{ex:toy} the important correlations are the parent-child item pairs,
namely, $a_1a_2, a_2a_3, a_2a_4, a_4a_5$. These are exactly, the itemsets (along with
the singletons) that correspond to the model.
\end{example}

Our choice for the models is particularly good since the complexity of models
reflects the size of itemset family. Since Bayesian approach has an in-built
tendency to punish complex families (we will see this in
Section~\ref{sec:compute}), we are punishing large families of itemsets.  If
the data states that simple models are sufficient, then the probability of
complex models will be low, and consequently the score for large itemsets will
also be low. In other words, we casted the problem of pattern set explosion
into a model overfitting problem and used Occam's razor to punish the
complex models!

\subsection{Computing the Model}
\label{sec:compute}
Now that we have defined our model $M$, our next step is to compute the
posterior probability $P(M \mid D)$. That is, the probability of $M$ given the
data set $D$. We select the model prior $P(M)$ to be uniform.  Recall that in
Eq.~\ref{eq:expform} a model $M$ has a set of parameters, that is, to
pinpoint a single distribution in $M$ we need a set of parameters $r$.
Following the Bayesian approach to compute $P(M \mid D)$ we need to marginalize
out the nuisance parameters $r$,
\[
	P(M \mid D) = \int_r P(M, r \mid D) \propto \int_r P(r \mid M) \prod_{t \in D} P(t \mid M, r).
\]
In the general case, this integral is too complex to solve analytically so we employ
the popular BIC estimate~\cite{schwarz:78:estimating},
\begin{equation}
\label{eq:bic}
	P(M \mid D) \approx C \times P(D \mid M, r^*)  \times \exp\fpr{-\log{\abs{D}}\frac{\abs{\ifam{F}} - 1}{2}},
\end{equation}
where $C$ is a constant and $r^*$ is the maximum likelihood estimate of the
model parameters. This estimate is correct when
$\abs{D}$ approaches infinity~\cite{schwarz:78:estimating}. So instead of
computing a complex integral our challenge is to discover the maximum
likelihood estimate $r^*$ and compute the likelihood of the data.
Unfortunately, 
using such model is an \np-hard problem (see, for example,~\cite{tatti:06:computational}). We will remedy this problem
in Section~\ref{sec:decompose} by considering a large subclass of exponential models
for which the maximum likelihood can be easily computed.

\subsection{Justifications for Exponential Model}
In this section we will provide strong theoretical justification for our choices
and show that our score fulfills the goals we set in the introduction.

We saw in Example~\ref{ex:toy} that if the data comes from the independence
model, then we only need the frequencies of the singleton itemsets to
completely explain the underlying model.  The next theorem shows that this
holds in general case.

\begin{theorem}
\label{thr:itemsetsderive}
Assume that data $D$ is generated from a distribution $p$ that comes from an
exponential model $M$. Let $\ifam{F} = \fm{M}$ be the family of
itemsets.  We can derive the maximum likelihood estimate from the frequencies
of $\ifam{F}$.  Moreover, as the number of transactions goes to infinity, we
can derive the true distribution $p$ from the frequencies of $\ifam{F}$.
\end{theorem}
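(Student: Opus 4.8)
The plan is to reduce the statement to the classical characterisation of the maximum likelihood estimate of an exponential family in terms of its sufficient statistics. Here the sufficient statistics of $M$ are exactly the indicators $S_X$ with $X \in \ifam{F}$, and the empirical average of $S_X$ over the transactions of $D$ is, by definition, the frequency: since $S_X(t) = 1$ precisely when $t$ covers $X$, we have $\frac{1}{\abs{D}}\sum_{t \in D} S_X(t) = \freq{X}$. So ``knowing the frequencies of $\ifam{F}$'' is the same as knowing the empirical expectations of all the sufficient statistics of $M$.

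First I would write the per-transaction average of the log-likelihood of a distribution $q \in M$ as $\frac{1}{\abs{D}}\sum_{t \in D}\log q(t) = -\ent{\hat{p}} - \kl{\hat{p} \;\|\; q}$, where $\hat{p}$ is the empirical distribution of $D$; maximising the likelihood over $M$ is therefore the same as minimising $\kl{\hat{p} \;\|\; q}$ over $q \in M$. Because $M$ was defined to also contain all limits of its exponential-form distributions, this minimiser $q^*$ exists; standard exponential-family theory gives moreover that it is unique and is characterised by moment matching, $\mean{q^*}{S_X} = \mean{\hat{p}}{S_X} = \freq{X}$ for every $X \in \ifam{F}$ (equivalently, $q^*$ is the maximum-entropy distribution subject to these frequency constraints). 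These are exactly the equations one obtains by differentiating the log-likelihood with respect to $r$ and setting the gradient to $0$; passing to the limit distributions only accommodates the boundary case of zero probabilities. Hence $q^*$ — and therefore a valid maximum likelihood parameter vector $r^*$ realising it — depends on $D$ only through the numbers $\freq{X}$, $X \in \ifam{F}$, and is obtained from them by solving this moment-matching / maximum-entropy problem. This proves the first claim.

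For the asymptotic statement I would invoke the law of large numbers: because the transactions are i.i.d.\ draws from $p$, for each fixed $X \in \ifam{F}$ the frequency $\freq{X}$ converges (almost surely) to $\mean{p}{S_X}$ as $\abs{D} \to \infty$. Since $p$ itself lies in $M$, it satisfies $\mean{q}{S_X} = \mean{p}{S_X}$ for all $X \in \ifam{F}$ with $q = p$, and by the uniqueness just recorded $p$ is the only member of $M$ with these expectations. Combining this with the continuity of the map that sends a feasible vector of target moments to the corresponding fitted distribution — again a standard property of the closure of an exponential family — we conclude that $q^* \to p$, i.e.\ in the limit the true distribution is recovered from the frequencies of $\ifam{F}$.

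The step I expect to be the real obstacle is the careful handling of the limit distributions built into the definition of $M$: asserting existence, uniqueness and continuity of the maximum likelihood (equivalently maximum-entropy) solution when the optimum may sit on the boundary of the simplex and have zero entries. These facts are classical — Barndorff-Nielsen's theory of exponential families and Csisz\'{a}r's results on information projections — but they must be quoted with care rather than re-derived. A minor point to dispatch along the way is that the $S_X$ may be affinely dependent (for instance $S_\emptyset \equiv 1$), so the parameter vector $r$ need not be identifiable; this is harmless, since the theorem only requires recovering the distribution $q^*$, which is identifiable, together with some parameter vector realising it.
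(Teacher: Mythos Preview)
Your proposal is correct and takes essentially the same route as the paper: both identify the MLE in $M$ with the maximum-entropy distribution matching the empirical frequencies $\freq{X}$, $X \in \ifam{F}$, and both lean on Csisz\'{a}r-type information-projection results for existence and uniqueness, with the zero-probability boundary handled in the paper by an explicit $\epsilon$-perturbation where you instead cite closure properties of the exponential family. The only real difference is presentational: the paper verifies the likelihood optimality of the maximum-entropy distribution by a direct computation showing $\log p^*(D)/q(D) = \abs{D}\,\kl{p^* \| q} \geq 0$ from the exponential parametrisation, whereas you quote this moment-matching characterisation as standard exponential-family theory.
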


The preceding theorem showed that $\fm{M}$ is sufficient family of itemsets in
order to derive the correct true distribution. The next theorem shows that we
favor small families: if the data can be explained
with a simpler model, that is, using less itemsets, then the simpler model will
be chosen and, consequently, redundant itemsets will have a low score.

\begin{theorem}
\label{thr:occam}
Assume that data $D$ is generated from a distribution $p$ that comes from a
model $M$. Assume also that if any other model, say $M'$, contains this
distribution, then $\abs{\fm{M'}} > \abs{\fm{M}}$. Then the following holds: as
the number of data points in $D$ goes into infinity, $\score{X} = 1$ if $X \in
\fm{M}$, otherwise $\score{X} = 0$.
\end{theorem}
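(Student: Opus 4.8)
The plan is to combine the BIC approximation in Eq.~\ref{eq:bic} with Theorem~\ref{thr:itemsetsderive} to show that, asymptotically, all posterior mass concentrates on $M$ (or models equivalent to it), and then read off the score from the definition in Eq.~\ref{eq:defscore}. First I would partition the model collection $\mods{M}$ into three groups relative to the data-generating distribution $p$: (i) models that contain $p$ and have a family of size $\abs{\fm{M}}$ — call these the \emph{minimal correct} models; (ii) models that contain $p$ but have strictly larger families; and (iii) models that do not contain $p$. For each model $M'$, I would take logarithms in Eq.~\ref{eq:bic}, so that $\log P(M' \mid D) \approx \log C + \sum_{t \in D} \log P(t \mid M', r^*_{M'}) - \tfrac{1}{2}\log\abs{D}\,(\abs{\fm{M'}}-1)$. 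Dividing by $\abs{D}$, the dominant term is the average log-likelihood, which by the law of large numbers converges to $-\ent{p} - \kl{p \,\|\, p^*_{M'}}$, where $p^*_{M'}$ is the maximum-likelihood distribution in $M'$ and $\ent{p}$ is the entropy of $p$. For models in groups (i) and (ii) the KL term vanishes; for models in group (iii) it is a strictly positive constant, so such models are suppressed by a factor $\exp(-\Omega(\abs{D}))$ and their total posterior mass goes to $0$.

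The second step is to separate groups (i) and (ii), where the leading likelihood terms tie. Here the BIC penalty $-\tfrac12\log\abs{D}\,(\abs{\fm{M'}}-1)$ is the tie-breaker: a model in group (ii) has $\abs{\fm{M'}} \geq \abs{\fm{M}}+1$, so relative to a minimal correct model its log-posterior is smaller by at least $\tfrac12\log\abs{D}$ after accounting for the (bounded) fluctuation of the log-likelihood difference around zero. Strictly, I would argue that $\sum_{t\in D}\bigl(\log P(t\mid M',r^*_{M'}) - \log P(t\mid M,r^*_{M})\bigr)$ is $O_p(1)$ (or at worst $o(\log\abs{D})$) when both models contain $p$, using standard asymptotics of the log-likelihood ratio, so the $\Theta(\log\abs{D})$ complexity gap dominates and the posterior ratio of any group-(ii) model to $M$ tends to $0$. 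Hence in the limit all posterior mass sits on the minimal correct models.

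The final step is to identify which itemsets get score $1$. The hypothesis of the theorem — that every other model containing $p$ has a strictly larger family — means $M$ is (up to families of equal size) the unique minimal correct model, and I would argue its family $\fm{M}$ is forced: any other model $M'$ with $p \in M'$ and $\abs{\fm{M'}} = \abs{\fm{M}}$ must in fact have $\fm{M'} = \fm{M}$, since otherwise one could build a model on $\fm{M}\cup\fm{M'}$ containing $p$ whose family, being a proper superset of $\fm{M}$, would be larger than $\fm{M}$ — consistent with the hypothesis but showing $M'$ is not minimal unless its family already equals $\fm{M}$. (If several minimal correct models share the family $\fm{M}$, the argument is unaffected.) Thus $\sum_{M' \text{ minimal correct}} P(M' \mid D) \to 1$ and every such model has family exactly $\fm{M}$. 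Plugging into Eq.~\ref{eq:defscore}: if $X \in \fm{M}$ then $X$ lies in the family of every surviving model, so $\score{X} \to 1$; if $X \notin \fm{M}$ then $X$ lies in no surviving model's family, so $\score{X} \to 0$.

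The main obstacle is the tie-breaking argument of the second step: one must control the log-likelihood-ratio fluctuation between two different correct models carefully enough to see that the $\log\abs{D}$ BIC gap genuinely wins. This requires that the BIC approximation error is uniform over $\mods{M}$ and that the ML estimates converge at the usual $\sqrt{\abs{D}}$ rate inside each correct model — facts that hold for exponential families but that I would invoke rather than reprove. A secondary subtlety is handling the limiting distributions included in $M$ (the zero-probability boundary cases), where the BIC estimate and the Fisher-information regularity conditions degrade; I would treat these by a continuity/limiting argument, noting that the score is defined via the closure of the exponential family.
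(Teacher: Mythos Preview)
Your approach is essentially the paper's: compare each $M'$ to $M$ through the BIC formula, eliminate models not containing $p$ by a Kullback--Leibler gap of order $\abs{D}$, and eliminate larger correct models by observing that the log-likelihood ratio between two correct models is stochastically bounded---the paper cites a Wilks-type lemma (Lemma~\ref{lem:x2}) giving a $\chi^2$ limit, you invoke $O_p(1)$---so the $\Theta(\log\abs{D})$ penalty difference wins. One simplification: your third paragraph is unnecessary and the union argument there does not actually establish what you claim; the hypothesis already says that \emph{every} other model containing $p$ has strictly larger family, so your group~(i) is simply $\{M\}$ by assumption and no further uniqueness argument is needed.
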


\subsection{Decomposable Models}
\label{sec:decompose}
We saw in Section~\ref{sec:compute} that in practice we cannot compute the score for
general exponential models. In this section we study a subclass of
exponential models, for which we can easily compute the needed score.  Roughly
speaking, a decomposable model is an exponential model where the corresponding
maximal itemsets can be arranged to a specific tree, called junction tree.  By
considering only decomposable models we obviously will lose some models, for example, the
discrete Gaussian model, that is, a model corresponding to all itemsets of size
1 and 2 is not decomposable. On the other hand, many interesting and
practically relevant models are decomposable, for example Chow-Liu trees.
Finally, these models are closely related to Bayesian networks and Markov
Random Fields (see~\cite{cowell:99:probabilistic} for more details).

To define a decomposable model,
let $\ifam{F}$ be a downward closed family of itemsets. We write $\ifam{G} =
\max\fpr{\ifam{F}}$ to be the set of maximal itemsets from $\ifam{F}$. Assume
that we can build a tree $\tree{T}$ using itemsets from $\ifam{G}$ as nodes with the
following property: If $X, Y \in \ifam{G}$ have a common item, say $a$, then
$X$ and $Y$ are connected in $\tree{T}$ (by a unique path) and every itemset
along that path contains $a$. If this property holds for $\tree{T}$, then
$\tree{T}$ is called \emph{junction tree} and $\ifam{F}$ is
\emph{decomposable}. We will use $E(\tree{T})$ to denote the edges of the tree.

Not all families have junction trees and some families may have multiple
junction trees.

\begin{example}
\begin{figure}[htb!]
\centering
\subfigure[Decomposable family $\ifam{F}_1$ of itemsets\label{fig:tree:a}]{\begin{tikzpicture}[>=latex',line join=bevel,scale=0.23]\pgfsetlinewidth{0.5bp}\pgfsetcolor{black}
  \draw [] (101.87bp,135.34bp) .. controls (116.47bp,141.71bp) and (132.46bp,148.69bp)  .. (147bp,155.04bp);
  \draw [] (248.28bp,53bp) .. controls (259.93bp,53bp) and (272.25bp,53bp)  .. (283.89bp,53bp);
  \draw [] (101.87bp,93.007bp) .. controls (116.15bp,86.873bp) and (131.76bp,80.168bp)  .. (146.04bp,74.031bp);
\begin{scope}
  \definecolor{strokecol}{rgb}{0.0,0.0,0.0};
  \pgfsetstrokecolor{strokecol}
  \draw (53bp,114bp) ellipse (53bp and 53bp);
  \draw (53bp,114bp) node {$a_1a_2$};
\end{scope}
\begin{scope}
  \definecolor{strokecol}{rgb}{0.0,0.0,0.0};
  \pgfsetstrokecolor{strokecol}
  \draw (195bp,53bp) ellipse (53bp and 53bp);
  \draw (195bp,53bp) node {$a_2a_4$};
\end{scope}
\begin{scope}
  \definecolor{strokecol}{rgb}{0.0,0.0,0.0};
  \pgfsetstrokecolor{strokecol}
  \draw (195bp,176bp) ellipse (52bp and 52bp);
  \draw (195bp,176bp) node {$a_2a_3$};
\end{scope}
\begin{scope}
  \definecolor{strokecol}{rgb}{0.0,0.0,0.0};
  \pgfsetstrokecolor{strokecol}
  \draw (337bp,53bp) ellipse (53bp and 53bp);
  \draw (337bp,53bp) node {$a_4a_5$};
\end{scope}\end{tikzpicture}}
\hspace{0.3cm}
\begin{minipage}{4cm}
\subfigure[Decomposable family $\ifam{F}_2$ after merge\label{fig:tree:b}]{\begin{tikzpicture}[>=latex',line join=bevel,scale=0.23]\pgfsetlinewidth{0.5bp}\pgfsetcolor{black}
  \draw [] (281.74bp,69bp) .. controls (293.76bp,69bp) and (306.07bp,69bp)  .. (317.57bp,69bp);
  \draw [] (106.39bp,69bp) .. controls (117.93bp,69bp) and (130.31bp,69bp)  .. (142.38bp,69bp);
\begin{scope}
  \definecolor{strokecol}{rgb}{0.0,0.0,0.0};
  \pgfsetstrokecolor{strokecol}
  \draw (53bp,69bp) ellipse (53bp and 53bp);
  \draw (53bp,69bp) node {$a_1a_2$};
\end{scope}
\begin{scope}
  \definecolor{strokecol}{rgb}{0.0,0.0,0.0};
  \pgfsetstrokecolor{strokecol}
  \draw (212bp,69bp) ellipse (69bp and 69bp);
  \draw (212bp,69bp) node {$a_2a_3a_4$};
\end{scope}
\begin{scope}
  \definecolor{strokecol}{rgb}{0.0,0.0,0.0};
  \pgfsetstrokecolor{strokecol}
  \draw (371bp,69bp) ellipse (53bp and 53bp);
  \draw (371bp,69bp) node {$a_4a_5$};
\end{scope}\end{tikzpicture}}
\subfigure[Decomposable family $\ifam{F}_3$ after the second merge\label{fig:tree:c}]{\begin{tikzpicture}[>=latex',line join=bevel,scale=0.23]\pgfsetlinewidth{0.5bp}\pgfsetcolor{black}
  \draw [] (281.56bp,69bp) .. controls (293.67bp,69bp) and (306.26bp,69bp)  .. (318.37bp,69bp);
  \draw [] (106.39bp,69bp) .. controls (117.93bp,69bp) and (130.31bp,69bp)  .. (142.38bp,69bp);
\begin{scope}
  \definecolor{strokecol}{rgb}{0.0,0.0,0.0};
  \pgfsetstrokecolor{strokecol}
  \draw (53bp,69bp) ellipse (53bp and 53bp);
  \draw (53bp,69bp) node {$a_1a_2$};
\end{scope}
\begin{scope}
  \definecolor{strokecol}{rgb}{0.0,0.0,0.0};
  \pgfsetstrokecolor{strokecol}
  \draw (212bp,69bp) ellipse (69bp and 69bp);
  \draw (212bp,69bp) node {$a_2a_3a_4$};
\end{scope}
\begin{scope}
  \definecolor{strokecol}{rgb}{0.0,0.0,0.0};
  \pgfsetstrokecolor{strokecol}
  \draw (388bp,69bp) ellipse (69bp and 69bp);
  \draw (388bp,69bp) node {$a_3a_4a_5$};
\end{scope}\end{tikzpicture}}
\end{minipage}
\caption{Figure~\ref{fig:tree:a} shows that the itemset family given in
Example~\ref{ex:toy} is decomposable. Figure~\ref{fig:tree:b} shows the junction tree for the family
after \textnormal{$\merge{\set{a_2}, a_3, a_4}$} and Figure~\ref{fig:tree:c} shows
the junction tree after \textnormal{$\merge{\set{a_4}, a_3, a_5}$}.}
\end{figure}
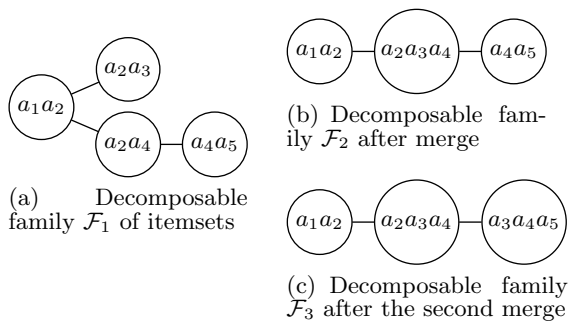

Let $\ifam{F}_1$ be the family of itemsets connected to the Chow-Liu model given
in Example~\ref{ex:toy}.
The maximal itemsets $\ifam{F}_1$ 
are $\set{a_1a_2, a_2a_3, a_2a_4, a_4a_5}$. Figure~\ref{fig:tree:a} shows
a junction tree for this family, making the family decomposable. On the
other hand, family $\set{a_1a_2, a_1a_3, a_2a_3}$ is not decomposable
since there is no junction tree for this family.
\end{example}

The most important property of decomposable families is that we can compute the
maximum likelihood efficiently. We first define the entropy of an itemset $X$, denoted by $\ent{X}$,
as
\[
	\ent{X} = -\sum_{t \in \set{0, 1}^{\abs{X}}} q_D(X = t) \log  q_D(X =t),
\]
where $q_D$ is the empirical distribution of the data.

\begin{theorem}
\label{thr:decomposelikelihood}
Let $\ifam{F}$ be a decomposable family and let $\tree{T}$ be its junction tree.
The maximum log-likelihood is equal to
\[
	- \frac{\log P(D \mid M, r^*)}{\abs{D}} = \sum_{X \in \max\fpr{\ifam{F}}} \ent{X} - \sum_{(X, Y) \atop \in E(\tree{T})} \ent{X \cap Y}.
\]
\end{theorem}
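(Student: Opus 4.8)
The plan is to reduce to the classical factorization of decomposable (graphical) distributions and then rewrite the log-likelihood in terms of entropies. First I would recall that a decomposable family $\ifam{F}$ with junction tree $\tree{T}$ corresponds, via the exponential form, to the set of distributions that factorize over $\tree{T}$; concretely, any distribution $p$ in $M$ can be written as a ratio of clique marginals over separator marginals,
\[
	p(t) = \frac{\prod_{X \in \max\fpr{\ifam{F}}} p(t_X)}{\prod_{(X,Y) \in E(\tree{T})} p(t_{X \cap Y})}.
\]
This is the standard junction-tree factorization, and it follows from the running-intersection property that defines a junction tree (the same property ensures the exponential model over $\ifam{F}$ contains exactly these factorizing distributions). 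I would justify this identification by noting that the indicator functions $S_X$ for $X$ in a fixed maximal clique span the same space of sufficient statistics as the full marginal on that clique, so maximizing likelihood inside $M$ is the same as maximizing over all distributions that respect the conditional-independence structure encoded by $\tree{T}$.

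The second step is to identify the maximum likelihood estimate. For graphical models over a junction tree, the MLE is obtained by plugging the empirical clique marginals $q_D$ into the factorization: $r^*$ yields the distribution $p^*(t) = \prod_X q_D(t_X) / \prod_{(X,Y)} q_D(t_{X\cap Y})$. This is a classical fact (see Lauritzen / Cowell et al.), provable by showing this candidate lies in $M$ and that its expected sufficient statistics match the empirical ones, which suffices by the exponential-family characterization of the MLE.

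The third step is the entropy bookkeeping, which is essentially a direct computation. Taking logs of $p^*$ and averaging over the data,
\[
	-\frac{\log P(D \mid M, r^*)}{\abs{D}} = -\frac{1}{\abs{D}}\sum_{t \in D}\log p^*(t)
	= -\sum_{t\in D}\frac{1}{\abs{D}}\Bigl(\sum_X \log q_D(t_X) - \sum_{(X,Y)}\log q_D(t_{X\cap Y})\Bigr).
\]
For each maximal itemset $X$, grouping transactions by their projection $t_X$ turns $-\frac{1}{\abs{D}}\sum_{t\in D}\log q_D(t_X)$ into $-\sum_{s\in\set{0,1}^{\abs{X}}} q_D(X=s)\log q_D(X=s) = \ent{X}$, and likewise each separator term contributes $\ent{X\cap Y}$. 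Summing gives the claimed identity.

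The main obstacle is the second step: rigorously pinning down that the MLE of the exponential model attached to $\ifam{F}$ is exactly the junction-tree plug-in estimate, rather than some other distribution in $M$. One must check (i) that the plug-in distribution actually belongs to $M$ — here the closure of $M$ under limits matters, since empirical marginals may assign zero probability to some configurations — and (ii) that it is the likelihood maximizer, which reduces to matching the moments of the sufficient statistics $\{S_X\}_{X\in\ifam{F}}$. Both are standard for decomposable models, so I would cite \cite{cowell:99:probabilistic} for the graphical-model machinery and keep the write-up focused on the identification of $M$ with the set of $\tree{T}$-factorizing distributions and the entropy rewriting, which are the parts specific to our setup.
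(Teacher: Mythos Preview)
Your proposal is correct and follows essentially the same route as the paper: invoke the junction-tree factorization, identify the MLE with the plug-in of empirical clique marginals, and then rewrite the averaged log-likelihood as clique entropies minus separator entropies. The only cosmetic difference is that for the MLE step the paper appeals to its own Theorem~\ref{thr:itemsetsderive} (the maximum-entropy characterization gives $p^*(X=1)=q_D(X=1)$ for all $X\in\ifam{F}$, and inclusion--exclusion then yields $p^*(X=t)=q_D(X=t)$ for every configuration), whereas you cite the graphical-models literature directly; either justification works and the remaining computation is identical.
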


\begin{example}
Assume that our model space $\mathcal{M}$ consists only of two models, namely the tree model
$M_1$ given in Example~\ref{ex:toy} and the independence model, which we denote
$M_2$.
Assume also that we have a dataset with $9$ transactions,
\begin{alignat*}{3}
t_1 & = (1, 0, 0, 0, 0),\  &  t_2 & = (1, 1, 0, 0, 1),\  & t_3 & = (0, 0, 0, 0, 0), \\
t_4 & = (1, 1, 1, 1, 1),\  &  t_5 & = (1, 1, 0, 1, 1),\  & t_6 & = (0, 1, 1, 0, 0), \\
t_7 & = (0, 0, 1, 0, 0),\  &  t_8 & = (0, 0, 1, 0, 1),\  & t_9 & = (0, 1, 1, 1, 1). 
\end{alignat*}

To compute the probabilities $P(M_1 \mid D)$ and $P(M_2 \mid D)$, we need to know the entropies of certain itemsets
\[
\begin{split}
	\ent{a_1} = \ent{a_2} &= \ent{a_3} = \ent{a_5} = 0.68, \ent{a_4} = 0.64, \\
	\ent{a_1a_2} &= 1.31, \ent{a_2a_3} = 1.37, \\
	\ent{a_2a_4}& = \ent{a_4a_5} = 1.06.
\end{split}
\]

The log-likelihood of the independence model is equal to
\[
	\log P(D \mid M_2, r^*) = -9(4 \times 0.68 + 0.64).
\]
We use the junction tree given in Figure~\ref{fig:tree:a} and
Theorem~\ref{thr:decomposelikelihood} to compute the log-likelihood of $M_1$,
\[
	\log P(D \mid M_1, r^*) = -9(1.31 + 1.37 + 2 \times 1.06 - 2 \times 0.68 - 0.64).
\]
Note that $\abs{\fm{M_2}} = 6$ and $\abs{\fm{M_1}} = 10$.
Thus Eq.~\ref{eq:bic} implies that
\[
\begin{split}
	P(M_1 \mid D) &\propto P(D \mid M_1, r^*)\exp\fpr{-9/2\log 9} = 6.27 \times 10^{-14},\\
	P(M_2 \mid D) &\propto P(D \mid M_2, r^*)\exp\fpr{-5/2\log 9} = 2.43 \times 10^{-14}.
\end{split}
\]
We get the final probabilities by noticing that $P(M_1 \mid D) +
P(M_2 \mid D) = 1$ so that we have $P(M_1 \mid D) = 0.72$ and $P(M_2 \mid D) =
0.28$. Consequently, the scores for itemsets are equal to $\score{a_1a_2} = \score{a_2a_3} = \score{a_2a_4} = \score{a_4a_5} = 0.72$,
$\score{a_i} = 1$, for $i = 1, \ldots, 5$, and $\score{X} = 0$ otherwise.

\end{example}

\section{Sampling Models}
\label{sec:algorithm}

Now that we have means for computing the posterior probability of a single
decomposable model, our next step is to compute the score of an itemset
namely, the sum in Eq.~\ref{eq:defscore}. The problem is that this sum has
an exponential number of terms, and hence we cannot solve by enumerating
all possible families. We approach this problem from a different point of
view. Instead of computing the score for each itemset individually, we will
divide our mining method into two steps:
\begin{enumerate}
\item Sample random decomposable models from the posterior distribution $P(M \mid D)$.
\item Estimate the true score of an itemset by computing the number of sampled
families of itemsets in which the itemset occurs.
\end{enumerate}

\subsection{Moving from One Model to Another}

In order to sample we will use a MCMC approach by modifying the current
decomposable family by two possible operations, namely

\begin{itemize}
\item \textsc{Merge}: Select two maximal itemsets, say $X$ and $Y$.  Let $S =
X \cap Y$. Since $X$ and $Y$ are maximal, $X - S \neq \emptyset$ and $Y - S
\neq \emptyset$. Select $x \in X - S$ and $y \in Y - S$. Add a new itemset $S
\cup \set{x, y}$ into the family $\ifam{F}$ along with all possible
sub-itemsets.  We will use notation $\merge{S, x, y}$ to denote this operation.

\item\textsc{Split}: Select an itemset $X \in \max\fpr{\ifam{F}}$.
Select two items $x, y \in X$. Delete $X$ and all sub-itemsets containing $x$
and $y$ simultaneously. We will denote this operation by $\splt{X, x, y}$.
\end{itemize}

Naturally, not all splits and merges are legal, since some operations may
result in a family that is not decomposable, or even downward closed. 

\begin{example}
The family $\ifam{F}_2$ given in Figure~\ref{fig:tree:b} is obtained from the
family $\ifam{F}_1$ given in Figure~\ref{fig:tree:a} by performing
$\merge{\set{a_2}, a_3, a_4}$.  Moreover, $\ifam{F}_3$ (Figure~\ref{fig:tree:c}) is obtained from
$\ifam{F}_2$ by performing $\merge{\set{a_4}, a_3, a_5}$. Conversely, we can go
back by performing $\splt{a_3a_4a_5, a_3, a_5}$ first and $\splt{a_2a_3a_4,
a_3, a_4}$ second.
\end{example}

The next theorem tells us which splits are legal.

\begin{theorem}
\label{thr:legalsplit}
Let $\ifam{F}$ be decomposable family and let $X \in \max\fpr{\ifam{F}}$ and let
$x, y \in X$. Then the resulting family after a split operation $\splt{X, x, y}$ 
is decomposable if and only, there are no other maximal itemsets in
$\ifam{F}$ containing $x$ and $y$ simultaneously.
\end{theorem}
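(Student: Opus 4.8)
My plan is to prove the two directions separately, using the defining property of junction trees. Recall that after $\splt{X,x,y}$ the new family $\ifam{F}'$ has maximal itemsets $\max(\ifam{F}) \setminus \{X\}$ together with $X - \{x\}$ and $X - \{y\}$ (whichever of these are not already contained in some other maximal itemset). The key structural fact I will lean on is that $\ifam{F}'$ is decomposable if and only if one can build a junction tree for it, and I will try to construct such a tree by \emph{local surgery} on the given junction tree $\tree{T}$ for $\ifam{F}$, replacing the node $X$ by the two nodes $X - \{x\}$ and $X - \{y\}$.

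For the ``if'' direction, assume no other maximal itemset of $\ifam{F}$ contains both $x$ and $y$. I would take $\tree{T}$, delete the node $X$, create nodes $X-\{x\}$ and $X-\{y\}$ joined by an edge (their intersection is $X - \{x,y\}$), and reattach the neighbors of $X$ in $\tree{T}$: a neighbor $Z$ with $X \cap Z \subseteq X - \{x\}$ gets attached to $X-\{x\}$, and similarly for $X-\{y\}$; a neighbor $Z$ whose intersection avoids both $x$ and $y$ can be attached to either. The point is that for a neighbor $Z$ of $X$ in $\tree{T}$, the separator $X \cap Z$ cannot contain both $x$ and $y$: by the running-intersection property an item in $X \cap Z$ must appear all along the path in $\tree{T}$ from $X$ to every other maximal itemset sharing it, and if both $x$ and $y$ were in $X \cap Z$ then $Z$ would be a maximal itemset containing $x$ and $y$ simultaneously (or would lead to one further along), contradicting the hypothesis. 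So each neighbor attaches cleanly to one of the two new nodes, and I would then verify the running-intersection property for the modified tree: any item $a$ whose subtree was connected in $\tree{T}$ remains connected, since splitting $X$ into $X-\{x\}$ and $X-\{y\}$ only removes $x$ from one copy and $y$ from the other, and the edge between them carries $X - \{x,y\}$.

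For the ``only if'' direction, suppose some maximal itemset $W \neq X$ of $\ifam{F}$ contains both $x$ and $y$. After the split, every set in $\ifam{F}'$ containing both $x$ and $y$ has been deleted, yet $W$ survives (it is still maximal, and is not a subset of $X-\{x\}$ or $X-\{y\}$ since it contains the item, $x$ or $y$ respectively, that those lack — here I use that $W \not\subseteq X$ because $W$ is maximal and $W \neq X$, combined with $\{x,y\}\subseteq W$). Now in the original $\tree{T}$, the nodes $X$ and $W$ are joined by a path all of whose nodes contain $x$ and a (possibly different) path all of whose nodes contain $y$; since $\tree{T}$ is a tree these paths coincide, so some intermediate maximal itemset, or $X$ and $W$ are adjacent, forces a separator containing both $x$ and $y$. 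After the split this structure is broken: $x$ lives only in $W$-side maximal sets and $y$ must also live in $W$, but the path connecting them in any purported junction tree for $\ifam{F}'$ would need an itemset containing $x$ and $y$, and none exists. Formalizing this into a clean contradiction — showing that \emph{no} junction tree for $\ifam{F}'$ can exist, not merely that the surgically modified $\tree{T}$ fails — is the step I expect to be the main obstacle, and I would handle it by arguing directly from the running-intersection property: the maximal itemsets containing $x$ in $\ifam{F}'$ form a set that must induce a connected subtree, the maximal itemsets containing $y$ likewise, these two subtrees share the node $W$, hence their union is connected and contains two distinct maximal itemsets (the neighbor of $W$ toward the old $X$, which lost $x$ or $y$) along the connecting path that must therefore contain both $x$ and $y$ — impossible.
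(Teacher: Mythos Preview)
Your ``if'' direction is essentially the paper's argument: replace the node $X$ by $X-\{x\}$ and $X-\{y\}$ joined by an edge and reattach each former neighbor to whichever new node contains the old separator. The only detail you omit is the cleanup when $X-\{x\}$ or $X-\{y\}$ fails to be maximal in $\ifam{F}'$; the paper handles this by contracting such a node into its maximal superset.

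Your ``only if'' direction has a real gap. You claim that a purported junction tree for $\ifam{F}'$ would be forced to have a node containing both $x$ and $y$, and call this ``impossible''. But it is not: $W$ itself survives in $\ifam{F}'$ (it is not a subset of $X$) and contains both $x$ and $y$, so the $x$-subtree and the $y$-subtree can meet at $W$ without any contradiction. Your running-intersection sketch never uses anything that distinguishes $X$ from $W$, and it cannot succeed using only the items $x$ and $y$. The missing ingredient is a third item $z \in X \setminus W$, which exists because $X$ and $W$ are distinct maximal sets. The paper exploits $z$ via a cycle lemma (Lemma~\ref{lem:cycle}) applied to the triple $W$, $X-\{x\}$, $X-\{y\}$: the three pairwise intersections $W \cap (X-\{x\})$, $(X-\{x\}) \cap (X-\{y\})$, $(X-\{y\}) \cap W$ contain the unique items $y$, $z$, $x$ respectively, and such a cycle is incompatible with any junction tree. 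Your direct approach can be repaired along the same line: in a purported junction tree for $\ifam{F}'$, pick maximal $A \supseteq X - \{y\}$ and $B \supseteq X - \{x\}$ (so $y\notin A$, $x\notin B$), let $C$ be the median of $A$, $B$, $W$; running intersection forces $x,y,z \in C$, whence $\{x,y,z\} \in \ifam{F}'$ by downward closure, contradicting that $\{x,y,z\} \subseteq X$ was deleted in the split.
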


\begin{example}
All possible split combinations are legal in families $\ifam{F}_1$ and $\ifam{F}_2$ given in
Figure~\ref{fig:tree:a} and Figure~\ref{fig:tree:b}. However, for $\ifam{F}_3$
given in Figure~\ref{fig:tree:c}
$\splt{a_2a_3a_4, a_3, a_4}$ is illegal since
$a_3a_4a_5$ contains $a_3$ and $a_4$. Similarly, the operation $\splt{a_3a_4a_5, a_3, a_4}$
is illegal.
\end{example}

In order to identify legal merges, we will need some additional structures.  Let
$\ifam{F}$ be a downward closed family and let $\ifam{G} = \max\fpr{\ifam{F}}$
be its maximal itemsets. Let $S$ be an itemset.  We construct a \emph{reduced
family}, denoted by $\rf{\ifam{F}; S}$ with the following procedure.
Let us first define
\[
	\ifam{X} = \set{X - S \mid X \in \ifam{G}, S \subsetneq X}.
\]
To obtain the reduced family $\rf{\ifam{F}; S}$ from $\ifam{X}$, assume there
are two itemsets $X, Y \in \ifam{X}$ such that $X \cap Y \neq \emptyset$. We
remove these two sets from $\ifam{X}$ and replace them with $X \cup Y$. This is
continued until no such replacements are possible. We ignore any reduced family
that contains $0$ or $1$ itemsets. The reason for this will be seen in Theorem~\ref{thr:legalmerge},
which implies that such families will not induce any legal merges. 

\begin{example}
The non-trivial reduced families of the family given in
Figure~\ref{fig:tree:a} are $\rf{\ifam{F}_1; a_2} = \set{a_1, a_3, a_4}$ and
$\rf{\ifam{F}_1; a_4} = \set{a_2, a_5}$. Similarly, the reduced families for the
family given in Figure~\ref{fig:tree:b} are $\rf{\ifam{F}_2; a_2} = \set{a_1,
a_3a_4}$, and $\rf{\ifam{F}_2; a_4} = \set{a_2a_3, a_5}$. Finally, the reduced
families for the family given in Figure~\ref{fig:tree:c} are $\rf{\ifam{F}_3;
a_2} = \set{a_1, a_3a_4}$ and $\rf{\ifam{F}_3; a_3a_4} = \set{a_2, a_5}$.
\end{example}

The next theorem tells us when $\merge{S, x, y}$ is legal.

\begin{theorem}
\label{thr:legalmerge}
Let $\ifam{F}$ be decomposable family. A merge operation is legal, that is,
$\ifam{F}$ is still decomposable after adding $Z = S \cup \set{x, y}$ if and only
if there are sets $V, W \in \rf{\ifam{F}; S}$, $V \neq W$, such that $x \in V$
and $y \in W$.
\end{theorem}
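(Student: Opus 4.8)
The plan is to characterize decomposability of the post-merge family in terms of the junction tree structure, and to show that the reduced family $\rf{\ifam{F}; S}$ captures exactly the obstruction. Recall the junction running-intersection property: two maximal itemsets sharing an item $a$ are joined in $\tree{T}$ by a path whose every node contains $a$. The key observation is that if we fix $S$ and look at the maximal itemsets $X$ with $S \subsetneq X$, the nodes of $\tree{T}$ that strictly contain $S$ form a connected subtree $\tree{T}_S$ (this follows from running intersection applied to the items of $S$), and contracting the portions where the "extra" part $X - S$ fails to be disjoint yields precisely the clusters of $\rf{\ifam{F}; S}$. So each element $V \in \rf{\ifam{F}; S}$ corresponds to a connected region of $\tree{T}_S$.

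First I would prove the "if" direction. Suppose $V, W \in \rf{\ifam{F}; S}$ are distinct with $x \in V$, $y \in W$. Pick witnesses $X, Y \in \max(\ifam{F})$ with $x \in X - S$, $y \in Y - S$, and $S \subsetneq X$, $S \subsetneq Y$. I would build a junction tree for the merged family $\ifam{F} \cup \{Z\}^{\downarrow}$ (where $Z = S \cup \{x,y\}$) by inserting a new node $Z$ on the path in $\tree{T}$ between $X$ and $Y$ — concretely, subdivide an appropriate edge $(P, Q)$ on the $X$-$Y$ path where $P$ lies in the $V$-region and $Q$ in the $W$-region, attaching $Z$ between them. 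I then need to verify the running-intersection property is preserved: for the item $a \in S$, the new node $Z \supseteq S \ni a$ sits on an all-$a$ segment, fine; for $x$, note $Z$ and $X$ both contain $x$ and everything between them (just $P$, which is in the $V$-region, hence $x \in P$ since $V$ arose by unioning along $x$'s region) contains $x$; symmetrically for $y$; and crucially $x$ appears in no maximal itemset of $\ifam{F}$ outside the $V$-region by the definition of the reduced family, so no new $x$-conflict is created. After the insertion, $X$ and $Y$ are still maximal (adding $Z = S\cup\{x,y\}$ with $|Z|$ possibly smaller does not subsume them) unless $X$ or $Y$ equals $Z$, an easy special case.

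For the "only if" direction, I would argue the contrapositive: suppose no such $V \neq W$ exist, i.e., either $x$ and $y$ both lie in the same cluster of $\rf{\ifam{F}; S}$, or one of them lies in no cluster at all (meaning the corresponding maximal itemset containing it, extending $S$, doesn't exist — but since $S$ is a legitimate intersection configuration for a merge this case needs care), or there is fewer than one cluster. In the principal case, $x, y$ both in some $V$: then there is a sequence of maximal itemsets $X_0, \ldots, X_k$ all strictly containing $S$, with consecutive ones overlapping outside $S$, $x \in X_0 - S$, $y \in X_k - S$. Inside $\tree{T}_S$, the path from $X_0$ to $X_k$ consists of nodes all containing $S$. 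Adding $Z = S \cup \{x, y\}$: a junction tree for the new family would have to connect $Z$ to $X_0$ (common item $x$) and to $X_k$ (common item $y$) via all-$x$ and all-$y$ paths respectively, while also the old nodes $X_0, \ldots, X_k$ retain their mutual connections through all-$S$ paths. I would show these requirements force a cycle: essentially $Z$ together with the $X_i$ must all be pairwise-reachable through paths respecting $S$, $x$, and $y$ simultaneously, and since $x$ and $y$ are "trapped" inside the same $V$-region whereas $Z$ contains both, one gets two internally-disjoint paths between $Z$ and the region, contradicting treeness. The degenerate cases ($|\rf{\ifam{F}; S}| \le 1$, or $x$/$y$ not covered) I would dispatch by noting that then there are not two maximal itemsets to merge in a way that stays downward-closed-and-decomposable, or that the merge would immediately break maximality/running intersection.

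The main obstacle I anticipate is making the correspondence "clusters of $\rf{\ifam{F}; S}$ $\leftrightarrow$ connected regions of the subtree $\tree{T}_S$" fully rigorous, including showing it is independent of the particular junction tree chosen, and handling the subtlety that a merge changes which itemsets are maximal (the new $Z$ might sit below, equal to, or incomparable with existing maximal itemsets). I would isolate this correspondence as a preliminary lemma about the structure of $\tree{T}_S$ before tackling either direction, since both directions lean on it; once it is in place, the "if" direction is a direct tree-surgery construction and the "only if" direction is a short cycle-counting argument.
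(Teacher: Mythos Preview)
Your overall strategy---identifying the clusters of $\rf{\ifam{F};S}$ with connected pieces of the subtree $\tree{T}_S$---is sound and is implicitly what the paper uses too. But your ``if'' direction has a genuine gap. You propose to subdivide the boundary edge $(P,Q)$ (with $P$ in the $V$-region, $Q$ in the $W$-region) by inserting $Z$ there, and you claim that running intersection for $x$ survives because ``$x\in P$ since $V$ arose by unioning along $x$'s region''. That inference is false: membership of $P$ in the $V$-region only says $P-S$ was merged (transitively) with some set containing $x$; it does \emph{not} force $x\in P$. Concretely, take $S=\{a\}$, $X=\{a,x,b\}$, $P=\{a,b,c\}$, $Q=\{a,d\}$, $Y=\{a,d,y\}$ with tree $X\!-\!P\!-\!Q\!-\!Y$. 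Here $V=\{x,b,c\}$, $W=\{d,y\}$, the boundary edge is $(P,Q)$, and your construction gives $X\!-\!P\!-\!Z\!-\!Q\!-\!Y$ with $Z=\{a,x,y\}$. Running intersection fails for $x$: both $X$ and $Z$ contain $x$ but the intermediate node $P$ does not. The same issue arises whenever the witness $X$ is not itself the boundary node; and since you have no control over where $x$ sits inside the $V$-region, you cannot in general pick $X=P$.

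The missing idea is the paper's rerouting step. Because $V\neq W$, some edge $(P_i,P_{i+1})$ on the $X$--$Y$ path has separator exactly $S$ (else $X,Y$ would land in the same cluster). Delete that edge and add the edge $(X,Y)$; since $X\cap Y=S$ as well, this swap preserves the junction-tree property (any item outside $S$ lives entirely on one side of the cut). Now $X$ and $Y$ are adjacent and you can insert $Z$ between them; running intersection for $x$ and $y$ is then trivial because the only neighbor of $Z$ on the $x$-side is $X$ itself. Your ``only if'' direction is morally right but the cycle argument as written is hand-wavy about why the two required paths must be internally disjoint; the paper instead works forward from a junction tree $\tree{U}$ of the merged family, shows $Z$ must sit on the $X$--$Y$ path in $\tree{U}$, excises $Z$ to recover a junction tree $\tree{R}$ for $\ifam{F}$, and then derives a contradiction from the chain $T_1,\ldots,T_N$ witnessing $V=W$ being forced through an edge of $\tree{R}$ with separator exactly $S$.
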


\begin{example}
Family $\ifam{F}_2$ in Figure~\ref{fig:tree:b} is obtained from the family
$\ifam{F}_1$ in Figure~\ref{fig:tree:a} by $\merge{a_2, a_3a_4}$. This is legal
operation since $\rf{\ifam{F}_1; a_2} = \set{a_1, a_3, a_4}$. Similarly, merge
transforming $\ifam{F}_2$ to $\ifam{F}_3$ is legal since $\rf{\ifam{F}_2; a_4}
= \set{a_2a_3, a_5}$. However, this merge would not be legal in $\ifam{F}_1$
since we do not have $a_3$ in $\rf{\ifam{F}_1; a_4}$.
\end{example}

\subsection{MCMC Sampling Algorithm}

Sampling requires a proposal distribution $Q(M' \mid M)$.
Let $M$ be a current model. We denote the number of legal operations, either a
split or a merge, by $\dgr{M}$. Let $M'$ be a model obtained by sampling uniformly one
of the legal operations and applying it to $M$.
The probability of reaching $M'$ from $M$ with a single
step is $Q(M' \mid M) = 1 / \dgr{M}$. Similarly, the probability of reaching $M$ from $M'$
with a single step is $Q(M \mid M') = 1 / \dgr{M'}$.
Consequently, if we sample $u$ uniformly from the interval $[0, 1]$ and accept the step moving from $M$ into $M'$
if and only if $u$ is smaller than
\begin{equation}
	\frac{P(M' \mid D) Q(M \mid M')}{P(M \mid D)Q(M' \mid M)} = \frac{P(M' \mid D) \dgr{M}}{P(M \mid D)\dgr{M'}},
\label{eq:transstep}
\end{equation}
then the limit distribution of the MCMC will be the posterior distribution $P(M
\mid D)$ provided that the MCMC chain is ergodic. The next theorem shows that this is the case.

\begin{theorem}
\label{thr:ergodic}
Any decomposable model $M$ can be reached from any other model
$M'$ by a sequence of legal operations.
\end{theorem}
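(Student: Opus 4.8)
The plan is to use the independence model as a hub: writing $\ifam{I}$ for the family consisting of all singletons (and $\emptyset$), I will show that every decomposable family can be reduced to $\ifam{I}$ by a sequence of legal splits, and, conversely, that $\ifam{I}$ can be transformed into any decomposable family by a sequence of legal merges. Chaining these gives a path from any $M$ to any $M'$ through $\ifam{I}$, which is exactly the required ergodicity.

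The conceptual core is a single lemma: every decomposable family $\ifam{F} \neq \ifam{I}$ admits a legal split, whose result is again decomposable and strictly smaller. To prove it I fix a junction tree $\tree{T}$ of $\ifam{F}$, root it at an arbitrary node, and (since $\ifam{F} \neq \ifam{I}$) pick a maximal itemset $X$ with $\abs{X} \geq 2$ of maximum depth. If $X$ is the root, the junction-tree connectedness property shows that no item of $X$ lies in any other maximal itemset (a neighbour of $X$ is a singleton disjoint from $X$, so no shared item can escape), hence $\splt{X, x, y}$ is legal for any $x, y \in X$ by Theorem~\ref{thr:legalsplit}. Otherwise let $P$ be the parent of $X$; since $X$ and $P$ are distinct maximal itemsets we have $X \setminus P \neq \emptyset$, and the maximal-depth choice together with the connectedness property forces every $x \in X \setminus P$ to occur in $X$ and in no other maximal itemset (a containing itemset would force a singleton child of $X$ equal to $\set{x} \subsetneq X$, contradicting maximality). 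Choosing such an $x$ and any other $y \in X$, Theorem~\ref{thr:legalsplit} again makes $\splt{X, x, y}$ legal. In every case the outcome is decomposable, still contains all singletons, and has strictly fewer itemsets, since the $2^{\abs{X}-2} \geq 1$ subsets of $X$ containing both $x$ and $y$ are deleted and (by legality of the split) never re-enter. As $\abs{\ifam{F}}$ is bounded below by $K+1$, iterating this split eventually reaches $\ifam{I}$.

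For the converse I reverse the split sequence, using that each legal split $\splt{X, x, y}$ is undone by the merge $\merge{X \setminus \set{x, y}, x, y}$, and each legal merge $\merge{S, x, y}$ by the split $\splt{S \cup \set{x, y}, x, y}$. Concretely, I check from the definitions of the two operations, together with Theorems~\ref{thr:legalsplit} and~\ref{thr:legalmerge}, that after a legal split the itemsets $X \setminus \set{x}$ and $X \setminus \set{y}$ (or the maximal itemsets containing them) provide the pair that the merge operation needs, that their intersection is exactly $X \setminus \set{x, y}$, and that re-adding $S \cup \set{x,y}$ with all its subsets restores precisely the itemsets that the split had removed; hence the merge produces a decomposable family and is therefore legal. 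This bookkeeping — confirming that the reversing operation is well defined and legal in the boundary cases — is the main technical obstacle, while the structural lemma above carries the real content. Once reversibility is in place, the move graph on decomposable families is connected through $\ifam{I}$, so the MCMC chain is ergodic and the theorem follows.
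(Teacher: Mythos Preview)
Your proposal is correct and follows essentially the same strategy as the paper: reduce any decomposable family to the independence model by repeated legal splits, then reverse the sequence with merges to reach the target. The only cosmetic difference is how the splittable clique is located---the paper picks a leaf $X$ with $\abs{X}\geq 2$ in the junction tree, while you root the tree and pick a maximum-depth node with $\abs{X}\geq 2$; both arguments then use the running intersection property to exhibit an item exclusive to $X$, and your version has the mild advantage of cleanly handling the case where all leaves are singletons but an internal clique is not.
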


Our first step is to compute the ratio of the models given in Eq~\ref{eq:transstep}. To do that we
will use the BIC estimate given in Eq.~\ref{eq:bic} and
Theorem~\ref{thr:decomposelikelihood}. Let us first
define a function
\[
\begin{split}
	&\gain{X, x, y} = \\
	&\quad\abs{D}(\ent{X} -\ent{X - x} - \ent{X - y} + \ent{X - \set{x, y}}),
\end{split}
\]
where $X$ is an itemset and $x, y \in X$ are items.

\begin{theorem}
\label{thr:ratio}
Let $M$ be a decomposable model and let $M' = \splt{X, x, y ; M}$ be
a model obtained by a legal split. Let $A$ be the BIC estimate of $P(M \mid D)$
and let $B$ be the BIC estimate of $P(M' \mid D)$. Then
\[
	B/A = \exp\fpr{\gain{X, x, y} - \log \abs{D} 2^{\abs{X} - 3}}.
\]
Similarly, if $M' = \merge{S, x, y ; M}$, then
\[
	B/A = \exp\fpr{-\gain{S \cup \set{x, y}, x, y} + \log \abs{D} 2^{\abs{S} - 1}}.
\]
\end{theorem}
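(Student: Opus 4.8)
The plan is to write out both BIC estimates via Eq.~\ref{eq:bic}, observe that the unspecified constant $C$ cancels in the quotient $B/A$, and then account separately for the change in the maximum-likelihood factor $P(D\mid M,r^*)$, which is controlled by Theorem~\ref{thr:decomposelikelihood}, and for the change in the complexity penalty $\exp\fpr{-\log\abs D\,(\abs{\ifam F}-1)/2}$, which is a pure counting argument. It is enough to prove the split identity: the operation $\merge{S,x,y}$ is exactly the inverse of $\splt{S\cup\set{x,y},x,y}$, so the merge formula follows from the split formula by swapping the roles of $M$ and $M'$ and substituting $\abs X=\abs S+2$ (which turns $2^{\abs X-3}$ into $2^{\abs S-1}$) and $\gain{X,x,y}$ by $\gain{S\cup\set{x,y},x,y}$.

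For the penalty factor, a legal $\splt{X,x,y}$ deletes from $\ifam F$ precisely the sub-itemsets of $X$ that contain both $x$ and $y$, namely the sets $\set{x,y}\cup Z$ with $Z\subseteq X-\set{x,y}$, of which there are $2^{\abs X-2}$ (one of them being $X$ itself); by Theorem~\ref{thr:legalsplit} no other maximal itemset needs any of them, so the new family is still downward closed and $\abs{\ifam F}$ changes by exactly $2^{\abs X-2}$, and dividing this by $2$ in the exponent of Eq.~\ref{eq:bic} produces the $\log\abs D\,2^{\abs X-3}$ term. For the likelihood factor I would build a junction tree $\tree{T}'$ of the split family from $\tree T$ by replacing the node $X$ with two nodes $X-x$ and $X-y$ joined by a fresh edge (with separator $X-\set{x,y}$) and reattaching each former neighbour $W$ of $X$ to whichever of $X-x$, $X-y$ still contains the old separator $X\cap W$; legality forces $\set{x,y}\not\subseteq X\cap W$, so such a choice exists, it respects the running-intersection property, and, crucially, each reattached separator is unchanged (e.g.\ $(X-x)\cap W=X\cap W$ when $x\notin X\cap W$). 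Since the right-hand side of Theorem~\ref{thr:decomposelikelihood} does not depend on which junction tree is chosen, I may apply it to $\tree{T}'$ and to $\tree T$ and subtract; every term cancels except those involving $X$, leaving $\log P(D\mid M',r^*)-\log P(D\mid M,r^*)=\abs D\fpr{\ent{X}-\ent{X-x}-\ent{X-y}+\ent{X-\set{x,y}}}=\gain{X,x,y}$, i.e.\ $P(D\mid M',r^*)/P(D\mid M,r^*)=\exp\fpr{\gain{X,x,y}}$. Multiplying the likelihood ratio by the penalty ratio then yields the stated expression for $B/A$.

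The subset count and the cancellation of $C$ are routine; the real obstacle is the junction-tree step. One has to verify that the reattachment genuinely produces a valid junction tree of the split family and that, in the degenerate situations the definitions permit — for instance when $X-x$ or $X-y$ is already contained in a neighbouring maximal itemset, or when a former neighbour of $X$ becomes redundant after the split — the entropy terms still collapse to exactly the four that appear in $\gain$. This is precisely the point at which one leans on the legality hypothesis of Theorem~\ref{thr:legalsplit} together with the downward-closedness of $\ifam F$.
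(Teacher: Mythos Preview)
Your proposal is correct and matches the paper's approach: the subset count $2^{\abs X-2}$ for the penalty, Theorem~\ref{thr:decomposelikelihood} for the likelihood change, and the merge identity recovered by inverting the split. The paper resolves the obstacle you flag by an explicit four-case analysis according to whether $X-x$ and/or $X-y$ remain maximal after the split (each non-maximal one being absorbed into the unique neighbouring clique whose separator with $X$ equals it), and checks in every case that the net change in the clique and separator entropy sums is exactly $\ent X+\ent{X-\set{x,y}}-\ent{X-x}-\ent{X-y}$; an equivalent shortcut, if you want to avoid the enumeration, is to note that your naive tree $\tree T'$ already computes the correct likelihood even when $X-x$ is non-maximal, since then the node term $\ent{X-x}$ and the separator term on its edge to the absorbing neighbour coincide and cancel.
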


To compute the gain we need the entropies for 4 itemsets. Let $X$ be an
itemset. To compute $\ent{X}$ we first order the transactions in $D$ such that
the values corresponding to $X$ are in lexicographical order. This is done with
a radix sort $\textsc{Sort}(D, X)$ given in Algorithm~\ref{alg:sort}.  This
sort is done in $O(\abs{D}\abs{X})$ time.  After the data is sorted we can
easily compute the entropy with a single data scan: Set $e = 0$ and $p = 0$.
If the values of $X$ of the current transaction is equal to the previous
transaction we increase $p$ by $1/\abs{D}$, otherwise we add $-p \log p$ to $e$
and set $p$ to $1/\abs{D}$. Once the scan is finished, $e$ will be equal to $\ent{X}$.
The pseudo code for computing the entropy is given in Algorithm~\ref{alg:countentropy}.

\begin{algorithm}[htb!]
	\lIf{$X = \emptyset$ \OR $D = \emptyset$} {\Return $D$\;}
	$a_i \define $ first item in $X$\;
	$D_0 \define \set{t \in D \mid t_i = 0}$; $D_1 \define \set{t \in D \mid t_i = 1}$\;
	$D_0 \define \textsc{Sort}(D_0, X - a_i)$; $D_1 \define \textsc{Sort}(D_1, X - a_i)$\;
	\Return $D_0$ concatenated with $D_1$.
\caption{\textsc{Sort}$(D, X)$. Routine for sorting the transactions. Used 
by $\textsc{Entropy}$ as a pre-step for computing the entropy.} 
\label{alg:sort}
\end{algorithm}

\begin{algorithm}[htb!]
	$\textsc{Sort}(D, X)$\;
	$e \define 0$; $p \define 0$\;
	$u \define$ first transaction in $D$\;
	\ForEach{$t \in D$} {
		\If{$u_X \neq t_X$} {
			$e \define e - p \log p$\;
			$u \define t$\;
			$p \define 1 / \abs{D}$\;
		}
		\Else {
			$p \define p + 1 / \abs{D}$\;
		}
	}
	$e \define e - p \log p$\;
	\Return $e$\;
\caption{\textsc{Entropy}$(D, X)$. Computes the entropy of $X$ from the dataset $D$.} 
\label{alg:countentropy}
\end{algorithm}

Our final step is to compute $\dgr{M}$ and actually sample the operations. To do that we first write
$\sdgr{M}$ for the number of possible \textsc{Split} operations and let
$\sdgr{M, X}$ be the number of possible \textsc{Split} operations using itemset
$X$. Similarly, we write $\mdgr{M, S}$ for the number of legal merges using
$S$ and also $\mdgr{M}$ for the amount of legal merges in total.

Given a maximal itemset $X$ we build an occurrence table, which we denote by
$\tb{X}$, of size $\abs{X} \times \abs{X}$.  For $x, y \in X$, the entry of the table
$\tb{X, x, y}$ is the number of maximal itemsets containing $x$ and $y$.  If
$\tb{X, x, y} = 1$, then Theorem~\ref{thr:legalsplit} states that $\splt{X, x, y}$ is
legal. Consequently, to sample a split operation we first select a maximal
itemset weighted by $\sdgr{M, X} / \sdgr{M}$. Once $X$ is selected we select
uniformly one legal pair $(x, y)$. 

To sample legal merges, recall that $\merge{S,
x, y}$ involves selecting two maximal itemsets $X$ and $Y$ such that $S
= X \cap Y$, $x \in X - S$, and $y \in Y - S$.  Instead of selecting these
itemsets, we will directly sample an itemset $S$ and then select two items $x$
and $y$.  This sampling will work only if two legal merges 
$\merge{S_1, x_1, y_1}$ and $\merge{S_2, x_2, y_2}$ result in two different outcomes
whenever $S_1 \neq S_2$.

\begin{theorem}
\label{thr:reduceseparate}
Let $S_1$ and $S_2$ be two different itemsets and let $x_1, y_1 \notin S_1$,
and $x_2, y_2 \notin S_2$ be items. Assume that $\merge{S_i, x_i, y_i}$ is a
legal merge for $i = 1, 2$.  Define $Z_i = S_i \cup \set{x_i, y_i}$ for $i = 1,
2$. Then $Z_1 \neq Z_2$.
\end{theorem}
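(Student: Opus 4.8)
The plan is to reduce the statement to one structural fact about legal merges and then close with a short case analysis. The fact is the following lemma: \emph{if $\merge{S, x, y}$ is a legal merge in a decomposable family $\ifam{F}$, then no maximal itemset of $\ifam{F}$ contains both $x$ and $y$}. Granting it, suppose toward a contradiction that $Z_1 = Z_2 =: Z$ while $S_1 \neq S_2$. Legality of $\merge{S_i, x_i, y_i}$ places $x_i$ and $y_i$ in distinct components of $\rf{\ifam{F}; S_i}$, so in particular $x_i \neq y_i$; together with $x_i, y_i \notin S_i$ this gives $S_i = Z \setminus \set{x_i, y_i}$ and $\abs{S_1} = \abs{S_2} = \abs{Z} - 2$. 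Hence $\set{x_1, y_1}$ and $\set{x_2, y_2}$ are distinct two-element subsets of $Z$, so at least one of $x_1, y_1$ --- call it $y_1$ after relabelling --- lies in $S_2 = Z \setminus \set{x_2, y_2}$. By Theorem~\ref{thr:legalmerge} and the construction of the reduced family, legality of $\merge{S_2, x_2, y_2}$ supplies maximal itemsets $P \supseteq S_2 \cup \set{x_2}$ and $Q \supseteq S_2 \cup \set{y_2}$. If also $x_1 \in S_2$, then $\set{x_1, y_1} \subseteq S_2 \subsetneq P$; otherwise $x_1 \in \set{x_2, y_2}$ and whichever of $P, Q$ received that item contains $S_2 \cup \set{x_1} \ni y_1$. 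Either way there is a maximal itemset containing both $x_1$ and $y_1$, contradicting the lemma applied to the legal merge $\merge{S_1, x_1, y_1}$. Therefore $S_1 = S_2$, a contradiction, and so $Z_1 \neq Z_2$.

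To prove the lemma I would argue by contradiction with the help of the junction tree. Suppose $\merge{S, x, y}$ is legal and that some maximal itemset $X$ contains both $x$ and $y$. Legality yields distinct components $V, W$ of $\rf{\ifam{F}; S}$ with $x \in V$ and $y \in W$, and unwinding how the reduced family is built gives maximal itemsets $X_1 \supseteq S \cup \set{x}$ and $X_2 \supseteq S \cup \set{y}$. Fix a junction tree $\tree{T}$ of $\ifam{F}$ and let $m$ be the median of the nodes $X, X_1, X_2$, that is, the unique node lying on each of the three paths between them. The junction tree property forces every node on the path between $X$ and $X_1$ to contain $x$, every node on the path between $X$ and $X_2$ to contain $y$, and, since $S \subseteq X_1 \cap X_2$, every node on the path between $X_1$ and $X_2$ to contain $S$. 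As $m$ lies on all three paths, $m \supseteq S \cup \set{x, y} = Z$. But then $S \subsetneq m$ (because $x \notin S$), so $m - S$ is one of the sets used to build $\rf{\ifam{F}; S}$ and it contains both $x$ and $y$; hence $x$ and $y$ land in the \emph{same} component, contradicting $V \neq W$. This simultaneously handles the degenerate case in which $Z$ already sits inside a maximal itemset, which then plays the role of $m$.

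The main obstacle is the lemma, and inside it the median step: one must observe that the separator $S$ together with the item-paths carrying $x$ and $y$ all meet at a single clique $m$ of the junction tree, and that $m$ is therefore forced to contain all of $Z$. The remaining ingredients are routine: the counting identity $\abs{S_i} = \abs{Z} - 2$, the fact that two distinct sets of equal size cannot be nested, and the two-way split on where $x_1$ sits relative to $S_2$. One small point I would make explicit is that $\tree{T}$ is a genuine connected tree, so medians exist and paths are unique; with that in hand the argument is self-contained given Theorem~\ref{thr:legalmerge}.
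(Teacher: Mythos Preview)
Your proof is correct and follows a genuinely different route from the paper's. The paper invokes Lemma~\ref{lem:cycle} (three maximal cliques whose cyclic pairwise intersections each carry a private item cannot sit in a decomposable family): assuming $Z_1=Z_2$, it picks maximal itemsets $X_i,Y_i$ with $X_i\cap Y_i=S_i$, $x_i\in X_i$, $y_i\in Y_i$, splits into the cases ``two of the four items coincide'' versus ``all four distinct'', and in each case exhibits the triple $X_1,X_2,Y_1$ as a forbidden cycle. You instead isolate the cleaner structural fact that a legal $\merge{S,x,y}$ forbids any maximal clique of $\ifam{F}$ from containing both $x$ and $y$, prove it directly via the median of three cliques in the junction tree, and then finish with a short dichotomy on whether $x_1\in S_2$. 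Your lemma is essentially the pre-merge version of an observation already used inside the proof of Theorem~\ref{thr:legalmerge} (that after the merge $Z$ is the \emph{only} maximal clique holding both $x$ and $y$); you extract it and supply a self-contained argument. The paper's route is more economical once Lemma~\ref{lem:cycle} is available for other purposes, while yours avoids that dependency and lets the junction-tree geometry do the work explicitly.
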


The construction of a reduced family states that, if $V, W \in
\rf{\ifam{F}; S}$, $V \neq W$, then $V \cap W = \emptyset$. It follows from
Theorem~\ref{thr:legalmerge} that
\[
	\mdgr{M, S} = \sum_{V, W \in \rf{\ifam{F}; S} \atop V \neq W} \abs{V}\abs{W}.
\]
To sample a merge we first sample an itemset $S$ weighted by $\mdgr{M,
S}/\mdgr{M}$.  Once $S$ is selected, we sample two different itemsets $V, W \in
\rf{\ifam{F}; S}$ (weighted by $\abs{V}$ and $\abs{W}$). Finally,
we sample $x \in V$ and $y \in W$.

Sampling $S$ for a merge operation is feasible only if the number of reduced
families for which the merge degree is larger than zero is small.

\begin{theorem}
Let $K$ be the number of items. There are at most $K$ maximal itemsets. There
are at most $K - 1$ itemsets for which the degree $\mdgr{M, \cdot} > 0$.
\label{thr:reducednumber}
\end{theorem}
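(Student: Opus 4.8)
The plan is to establish both bounds through the junction tree $\tree{T}$ attached to the decomposable family $\ifam{F}$, writing $\ifam{G} = \max\fpr{\ifam{F}}$ for its set of nodes (the maximal itemsets).

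\textbf{Bounding the number of maximal itemsets.} I would induct on the number of distinct items occurring in the members of $\ifam{G}$, which is at most $K$. If $\ifam{G}$ has only one node the claim is trivial. Otherwise, pick a leaf $X$ of $\tree{T}$, let $Y$ be its unique neighbour, and put $S = X \cap Y$. Maximality of $X$ gives $X - S \neq \emptyset$. The crucial observation is that every item $a \in X - S$ lies in no member of $\ifam{G}$ other than $X$: if $a$ also belonged to some $Z \in \ifam{G}$ with $Z \neq X$, then the unique path in $\tree{T}$ from $X$ to $Z$ would start with the edge $\pr{X, Y}$ and, by the junction tree property, every node on it would contain $a$ — in particular $Y$, contradicting $a \notin Y$. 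Hence deleting $X$ from $\tree{T}$ yields a junction tree whose nodes use at least one item fewer; the induction hypothesis then bounds that number of nodes, and restoring $X$ gives $\abs{\ifam{G}} \leq K$.

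\textbf{Bounding the number of $S$ with $\mdgr{M, S} > 0$.} I would first show that every such $S$ is an edge label of $\tree{T}$, i.e. $S = X \cap X'$ for some $\pr{X, X'} \in E(\tree{T})$. By Theorem~\ref{thr:legalmerge} and the construction of the reduced family, $\mdgr{M, S} > 0$ holds precisely when the maximal itemsets strictly containing $S$ — denote this collection $\ifam{G}_S$ — fall into two or more classes under the rule that groups $X$ with $X'$ whenever $\pr{X \cap X'} - S \neq \emptyset$, equivalently $X \cap X' \supsetneq S$; call the associated graph on $\ifam{G}_S$ the grouping graph $H_S$. (If some maximal itemset equals $S$, then it is the only maximal itemset containing $S$ and $\mdgr{M, S} = 0$, so we may assume $\ifam{G}_S$ is exactly the set of maximal itemsets containing $S$.) Applying the junction tree property to each item of $S$, the nodes containing $S$ form an intersection of subtrees of $\tree{T}$ and therefore induce a connected subtree $\tree{T}_S$; moreover every edge $\pr{X, X'}$ of $\tree{T}_S$ satisfies $X \cap X' \supseteq S$. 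Now if every edge of $\tree{T}_S$ were also an edge of $H_S$, then $H_S$ would be connected, since it would contain the connected spanning subgraph $\tree{T}_S$ on the same vertex set, forcing $\mdgr{M, S} = 0$. Hence whenever $\mdgr{M, S} > 0$ some edge $\pr{X, X'}$ of $\tree{T}_S \subseteq \tree{T}$ has $X \cap X' = S$. Since $\tree{T}$ has $\abs{\ifam{G}} - 1 \leq K - 1$ edges by the first part, the number of distinct edge labels — and thus the number of such $S$ — is at most $K - 1$.

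\textbf{Main obstacle.} The delicate step is the characterisation in the second bound: one must verify that $\ifam{G}_S$ genuinely induces a connected subtree (this is where the fact that an intersection of subtrees of a tree is again a subtree is used) and that a grouping graph with at least two components necessarily forces some edge of that subtree to carry the label $S$ exactly rather than a proper superset of it. The first bound, by contrast, is a routine leaf-peeling induction, and once the edge-label characterisation is in hand the counting is immediate from $\abs{E(\tree{T})} = \abs{\ifam{G}} - 1$.
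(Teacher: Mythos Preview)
Your proposal is correct and, for the second bound, follows essentially the same route as the paper: both arguments show that $\mdgr{M, S} > 0$ forces $S$ to occur as the intersection $X \cap X'$ of two adjacent nodes in the junction tree, and then count edges. The paper phrases this by taking $x, y$ in distinct members of $\rf{\ifam{F}; S}$, following the $X$--$Y$ path in $\tree{T}$, and observing that if every consecutive intersection along it strictly contained $S$ then $X$ and $Y$ would collapse into one set in the reduced family; your version with the subtree $\tree{T}_S$ and the grouping graph $H_S$ is the same argument organised a bit more structurally. One minor difference: the paper simply asserts that a junction tree has at most $K$ nodes, whereas you supply the leaf-peeling induction explicitly — so your write-up is in fact more self-contained on the first claim.
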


Pseudo-code for a sampling step is given in Algorithm~\ref{alg:sample}.

\begin{algorithm}[htb!]
	$u \define$ random integer between 1 and $\dgr{M}$\;
	\If{$u \leq \sdgr{M}$} {
		Sample $X$ from $\max\fpr{\ifam{F}}$ weighted by $\sdgr{M, X}$\;
		Sample $x, y \in X$ such that $X$ is the only maximal itemset containing both $x$ and $y$\;
		$M' \define \splt{X, x, y; M}$\;
		\nllabel{alg:sample:gain1}$g \define \gain{X, x, y} - \log \abs{D} 2^{\abs{X} - 3}$\;
	}
	\Else {
		\nllabel{alg:sample:s}Sample $S$ weighted by $\mdgr{M, S}$\;
		Sample $V \in \rf{\ifam{F}; S}$ weighted by $\abs{V}$\;
		Sample $W \in \rf{\ifam{F}; S}$, $V \neq W$, weighted by $\abs{W}$\;
		Sample $x \in V$ and $y \in W$\;
		$M' \define \merge{S, x, y; M}$\;
		\nllabel{alg:sample:gain2}$g \define - \gain{S \cup \set{x, y}, x, y} + \log \abs{D} 2^{\abs{S} - 1}$\;
	}

	$z \define $ random real number from $[0, 1]$\;
	\lIf{$z \leq \exp\fpr{g}\frac{\dgr{M}}{\dgr{M'}}$} {\Return{$M'$}\; }
	\lElse {\Return{$M$}\; }
\caption{MCMC step for sampling decomposable models.}
\label{alg:sample}
\end{algorithm}

\subsection{Speeding Up the Sampling}

We have demonstrated what structures we need to compute
so that we can sample legal operations. After a sample, we can
reconstruct these structures from scratch. In 
this section we show how to optimize the sampling by constructing
the structures incrementally using
 Algorithms~\ref{alg:splitupdate}--\ref{alg:mergeside}.

First of all, we store only maximal itemsets of $\ifam{F}$. Theorem~\ref{thr:reducednumber} states
that there can be only $K$ such sets, hence split and merge operations can be
done efficiently.

During a split or a merge, we need to update what split operations are legal
after the split. We do this by updating an occurrence table $\tb{X}$. An update
takes $O(\abs{X}^2)$ time.  The next theorem shows which maximal itemsets we
need to update for legal split operations after a merge.

\begin{theorem}
\label{thr:splitupdate}
Let $\ifam{F}$ be a downward closed family of itemsets and let $\ifam{G}$ be
the family after performing $\merge{S, x, y}$. Let $Y$ be a maximal itemset in
$\max\fpr{\ifam{F}} \cap \max\fpr{\ifam{G}}$. Then legal split operations using $Y$ remain unchanged during
the merge unless $Y$ is the \emph{unique} itemset among maximal itemsets in $\ifam{F}$
containing either $S \cup \set{x}$ or $S \cup \set{y}$.
\end{theorem}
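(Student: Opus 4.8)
The plan is to reduce the claim to tracking, for every pair of items $u,v\in Y$, the number $c(u,v)$ of maximal itemsets that contain both $u$ and $v$. By Theorem~\ref{thr:legalsplit}, $\splt{Y,u,v}$ is legal exactly when $c(u,v)=1$, so ``the legal splits using $Y$ are unchanged'' is equivalent to: for no pair $\set{u,v}\subseteq Y$ does $c(u,v)$ cross the value $1$ during the merge (move from $1$ to $\ge2$ or back). I assume, as the context requires, that $\ifam{F}$ is decomposable with junction tree $\tree{T}$ and that $\merge{S,x,y}$ is legal, so that $\ifam{G}$ is decomposable as well.

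The first step is to describe how the merge changes the maximal itemsets. By Theorem~\ref{thr:legalmerge} there are distinct $V,W\in\rf{\ifam{F};S}$ with $x\in V$ and $y\in W$; unwinding the construction of $\rf{\ifam{F};S}$ produces maximal itemsets $P_x\supseteq S\cup\set{x}$ and $P_y\supseteq S\cup\set{y}$. The operation adds $Z=S\cup\set{x,y}$ together with all its subsets. The structural fact at the heart of the proof is that no maximal itemset of $\ifam{F}$ contains both $x$ and $y$: were $P$ such an itemset, the subtrees $T_x,T_y,T_S$ of $\tree{T}$ spanned by the nodes containing $x$, containing $y$, and containing $S$ respectively would pairwise intersect ($P\in T_x\cap T_y$, $P_x\in T_x\cap T_S$, $P_y\in T_y\cap T_S$), so by the Helly property for subtrees of a tree they would share a node $m$, a maximal itemset with $m\supseteq S\cup\set{x,y}$; then $m-S$ would place $x$ and $y$ in the same member of $\rf{\ifam{F};S}$, contradicting Theorem~\ref{thr:legalmerge}. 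Given this, $Z$ is maximal in $\ifam{G}$ (no member of $\ifam{F}$ contained it), and the only maximal itemsets removed are those properly contained in $Z$; such a set cannot contain both $x$ and $y$, and anything contained in $S$, or strictly smaller than $S\cup\set{x}$ or than $S\cup\set{y}$, is a proper subset of $P_x$ or $P_y$ and hence not maximal. So the merge removes only whichever of $S\cup\set{x}$ and $S\cup\set{y}$ is maximal in $\ifam{F}$, and adds $Z$.

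The last step is a case analysis on a pair $\set{u,v}\subseteq Y$ whose count is affected. Since all added and removed itemsets are $\subseteq Z$, we must have $\set{u,v}\subseteq Z$, and since $Y$ omits $x$ or $y$ the only possibilities are $\set{u,v}\subseteq S$, or $\set{u,v}=\set{x,s}$, or $\set{u,v}=\set{y,s}$, with $s\in S\cap Y$. If $\set{u,v}\subseteq S$, then $Y$ and one of $P_x,P_y$ different from $Y$ (they cannot both equal $Y$, since $Y$ omits $x$ or $y$) are two maximal itemsets containing $\set{u,v}$; as $Z$ and every removed itemset also contain $S\supseteq\set{u,v}$, $c(u,v)\ge2$ both before and after, so it never crosses $1$. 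If $\set{u,v}=\set{x,s}$, the only possibly removed itemset containing it is $S\cup\set{x}$, while $Z$ contains it, so $c(x,s)$ changes by $+1$ minus the indicator that $S\cup\set{x}$ is maximal in $\ifam{F}$, a non-negative amount; since $c(x,s)\ge1$ always (because $\set{x,s}\subseteq Y$), a crossing can occur only when $S\cup\set{x}$ is not maximal and $c(x,s)=1$, i.e.\ $Y$ is the unique maximal itemset containing $\set{x,s}$ --- and since every maximal itemset containing $S\cup\set{x}$ contains $\set{x,s}$, this forces $Y$ to be the unique maximal itemset of $\ifam{F}$ containing $S\cup\set{x}$. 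The case $\set{u,v}=\set{y,s}$ is symmetric. Hence, unless $Y$ is the unique maximal itemset containing $S\cup\set{x}$ or the unique one containing $S\cup\set{y}$, no pair crosses $1$, which is precisely the assertion. The main obstacle is the structural fact above; once it is in hand, the rest is bookkeeping on the counts $c(u,v)$.
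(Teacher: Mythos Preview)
Your proof is correct and follows the same route as the paper's: both use Theorem~\ref{thr:legalsplit} to reduce the question to which pairs $\set{a,b}\subseteq Y\cap Z$ can change status, rule out $\set{a,b}\subseteq S$, and in the remaining case $\set{a,b}=\set{x,s}$ (symmetrically $\set{y,s}$) argue that a change forces $Y$ to be the unique maximal itemset of $\ifam{F}$ containing $S\cup\set{x}$. Your write-up is more complete in two places where the paper is terse: you supply an explicit argument (the Helly property for subtrees of the junction tree) for the fact that no maximal itemset of $\ifam{F}$ contains both $x$ and $y$---which the paper invokes without proof when it writes ``$b\neq y$''---and you explicitly account for the possible removal of $S\cup\set{x}$ or $S\cup\set{y}$ from $\max\fpr{\ifam{F}}$, thereby justifying the paper's bare opening assertion that the merge ``can only make splits illegal.''
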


The following theorem tells us how reduced families should be updated after a
merge operation. To ease the notation, let us denote by $\link{\ifam{F}, S, x}$
the unique itemset (if such exists) in $\rf{\ifam{F}; S}$ containing $x$.

\begin{theorem}
\label{thr:mergeupdate}
Let $\ifam{F}$ be a downward closed family of itemsets and let $\ifam{G}$ be
the family after performing $\merge{S, x, y}$.
Then the reduced families are updated as follows:
\begin{enumerate}
\setlength{\parskip}{0.0em}
\item Itemsets $\link{\ifam{F}, S, x}$ and $\link{\ifam{F}, S, y}$ in $\rf{\ifam{F}; S}$
are merged into one itemset in $\rf{\ifam{G}; S}$.
\item Itemset $\set{x}$ is added into $\rf{\ifam{G}; S \cup \set{y}}$.
Itemset $\set{y}$ is added into $\rf{\ifam{G}; S \cup \set{x}}$.
\item Let $T \subsetneq S$ and let $z \in S - T$. 
The itemset containing $z$ in $\rf{\ifam{F}; T \cup \set{y}}$ is augmented with item $x$.
Similarly, itemset containing $z$ in $\rf{\ifam{F}; T \cup \set{x}}$ is augmented with item $y$.
\item Otherwise, $\rf{\ifam{F}; T} = \rf{\ifam{G}; T}$ or $\mdgr{\ifam{F}; T} = 0$ and $\mdgr{\ifam{G}; T} = 0$.
\end{enumerate}
\end{theorem}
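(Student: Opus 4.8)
The plan is to analyze exactly how a single merge $\merge{S, x, y}$ changes the set of maximal itemsets, and then track how each reduced family $\rf{\ifam{F}; T}$ is affected by that structural change. First I would recall from the definition of \textsc{Merge} that the new itemset added is $Z = S \cup \set{x, y}$ together with all its subsets, and that $\ifam{G} = \max\fpr{\ifam{F}}$ changes only in that any maximal itemset of $\ifam{F}$ that is a subset of $Z$ is absorbed into $Z$; by Theorem~\ref{thr:legalmerge} the only such itemsets can be $S \cup \set{x}$ and $S \cup \set{y}$ (if they happened to be maximal), since legality requires $x$ and $y$ to live in distinct blocks $V, W$ of $\rf{\ifam{F}; S}$. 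So the real work is: for a fixed itemset $T$, compare the blocks obtained by taking $\set{U - T \mid U \in \ifam{G}_{\ifam{F}}, T \subsetneq U\}$ and closing under intersection, before and after the merge.

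I would then split into the cases according to the relationship between $T$ and $S$, which is what the four enumerated items of the theorem encode. Case 1 is $T = S$: the new maximal itemset $Z$ contributes $Z - S = \set{x, y}$, which intersects both the block of $\rf{\ifam{F}; S}$ containing $x$ and the block containing $y$; hence those two blocks (namely $\link{\ifam{F}, S, x}$ and $\link{\ifam{F}, S, y}$) get merged, which is exactly statement~1. I also need to check the absorbed itemsets $S \cup \set{x}$, $S \cup \set{y}$ do not change anything further here — they contributed singletons $\set{x}$, $\set{y}$ which are already inside those blocks. Case 2 is $T = S \cup \set{x}$ (and symmetrically $T = S \cup \set{y}$): here $T \subsetneq Z$, so $Z$ contributes $Z - T = \set{y}$ to $\rf{\ifam{G}; T}$, a block that was not present before (or merges with an existing one containing $y$); that is statement~2. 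Here I'd want to be a little careful that $T$ was not already a non-trivial reducer before, i.e. that adding $\set{y}$ really is the whole change, using that $T = S \cup \set{x}$ can be strictly contained only in $Z$ among new maximal sets. Case 3 is $T \subsetneq S$ with some $z \in S - T$: now $Z \supsetneq T$, and $Z - T$ contains $z$ as well as $x$ and $y$; so in $\rf{\ifam{G}; T}$ the block that contains $z$ picks up $x$ and $y$. Comparing with $\rf{\ifam{F}; T}$: before the merge the relevant maximal itemsets were $S \cup \set{x}$ and $S \cup \set{y}$ contributing blocks through $z$ that contained $x$ respectively $y$; after the merge $Z$ contributes $z, x, y$ together, effectively augmenting the ``$T \cup \set y$-side'' block with $x$ and the ``$T \cup \set x$-side'' block with $y$ — this is the statement~3 formulation, and I would phrase it in terms of $\rf{\ifam{F}; T \cup \set{y}}$ and $\rf{\ifam{F}; T \cup \set{x}}$ as the theorem does, using transitivity of the intersection-closure. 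Case 4 is everything else: if $T$ is neither a subset nor in the right relation to $S$, then either $T \not\subseteq Z$ so $Z$ contributes nothing new to $\rf{\cdot; T}$ and no old maximal itemset was removed that mattered, giving $\rf{\ifam{F}; T} = \rf{\ifam{G}; T}$; or $T$ induces no legal merges at all, i.e. $\mdgr{\ifam{F}; T} = \mdgr{\ifam{G}; T} = 0$ — which is precisely why we ignore trivial reduced families.

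The main obstacle I anticipate is the bookkeeping in Case 3, i.e. $T \subsetneq S$: here I must show that the \emph{only} maximal itemsets of $\ifam{F}$ strictly containing $T$ that get modified are the ones containing $z$ on the $x$-branch and the $y$-branch, and that all other blocks of $\rf{\ifam{F}; T}$ are untouched. This relies on the junction-tree property: any maximal itemset containing $S$ lies on the subtree of $\tree{T}$ through $S$, and the separators along the path to $x$'s block and to $y$'s block all contain $S$; I'd use Theorem~\ref{thr:legalmerge}'s characterization together with the fact (quoted in the paper before Theorem~\ref{thr:reduceseparate}) that distinct blocks of a reduced family are disjoint, to argue that the augmentation described does not accidentally merge two previously-distinct blocks of $\rf{\ifam{F}; T}$. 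A secondary subtlety is making sure the two possibly-absorbed maximal itemsets $S \cup \set{x}$ and $S \cup \set{y}$ are handled consistently across all cases: whenever one of them is maximal in $\ifam{F}$ but not in $\ifam{G}$, I must verify its disappearance does not remove a block from some $\rf{\cdot; T}$ beyond what the stated update already accounts for. Once these two points are nailed down, the remaining verifications are routine set manipulations, and the theorem follows by exhausting the four cases.
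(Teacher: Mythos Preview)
Your overall approach---tracking how the single new maximal itemset $Z = S \cup \set{x, y}$ enters each reduced family $\rf{\cdot; T}$ via a case split on the relation between $T$ and $S$, $x$, $y$---is exactly what the paper does, and your handling of Cases 1, 2, and 4 matches the paper's short argument (the paper in fact disposes of Claims~2 and~3 in one line by noting that $Z$ is the unique maximal itemset of $\ifam{G}$ containing both $x$ and $y$).

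Your Case 3, however, is misindexed. You analyse the reduced family at index $T$ with $T \subsetneq S$, but that situation belongs to Claim~4, not Claim~3: pick $z \in S - T$; then for any maximal $X \supseteq S \cup \set{x}$ and $Y \supseteq S \cup \set{y}$ in $\ifam{F}$, both $X - T$ and $Y - T$ already contain $z$, so in $\rf{\ifam{F}; T}$ the items $x$ and $y$ already lie in the \emph{same} block as $z$, and adding $Z$ changes nothing. Claim~3 is about indices of the form $T' = T \cup \set{x}$ or $T' = T \cup \set{y}$ with $T \subsetneq S$, i.e.\ indices containing exactly one of $x, y$ together with a proper subset of $S$. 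For such $T'$, say $T' = T \cup \set{y}$, no maximal itemset of $\ifam{F}$ strictly containing $T'$ can contain $x$ (it would then contain both $x$ and $y$), so $x$ appears in no block of $\rf{\ifam{F}; T'}$ whatsoever; after the merge, $Z - T' = (S - T) \cup \set{x}$ lands in the block containing $z$, which is precisely the stated augmentation. Your attempt to ``rephrase'' the analysis of $\rf{\cdot; T}$ in terms of $\rf{\cdot; T \cup \set{y}}$ conflates two different reduced families. Once you redo Case~3 at the correct index, the obstacle you flag about accidentally merging previously-distinct blocks evaporates: there is no pre-existing block of $\rf{\ifam{F}; T \cup \set{y}}$ containing $x$ to collide with.
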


Theorems~\ref{thr:splitupdate}~and~\ref{thr:mergeupdate} only covered the
updates during merges.  Since $\splt{S \cup \set{x, y}, x, y}$ and $\merge{S,
x, y}$ are opposite operations we can derive the needed updates for splits from
the preceding theorems.

\begin{corollary}[of Theorem~\ref{thr:splitupdate}]
\label{cor:splitupdate}
Let $\ifam{F}$ be a downward closed family of itemsets and let $\ifam{G}$ be
the family after performing $\splt{X, x, y}$. Let $Y$ be a maximal itemset in
$\max\fpr{\ifam{F}} \cap \max\fpr{\ifam{G}}$. Then legal split operations using $Y$ remain unchanged during
the merge unless $Y$ is the \emph{unique} itemset among maximal itemsets in $\ifam{G}$
containing either $X - \set{x}$ or $X - \set{y}$.
\end{corollary}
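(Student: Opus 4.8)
The plan is to obtain this corollary as an immediate consequence of Theorem~\ref{thr:splitupdate}, using that $\splt{X, x, y}$ is the inverse of a merge. Set $S = X - \set{x, y}$. For a legal split, $X$ is the only maximal itemset of $\ifam{F}$ containing both $x$ and $y$, so the sub-itemsets that $\splt{X, x, y}$ deletes are exactly the members of $\ifam{F}$ that contain $\set{x, y}$, all of which are subsets of $X$. On the other hand $\merge{S, x, y}$ adds $S \cup \set{x, y} = X$ together with all of its sub-itemsets; those not containing $\set{x, y}$ were never removed and are already present in $\ifam{G}$ (as $\ifam{G}$ is downward closed), so the net effect of the merge is to reinstate precisely the itemsets the split removed. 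Hence $\ifam{F}$ is the family obtained from $\ifam{G}$ by performing $\merge{S, x, y}$.

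Given this, I would apply Theorem~\ref{thr:splitupdate} verbatim, with $\ifam{G}$ in the role of the ``before'' family $\ifam{F}$ of the theorem, $\ifam{F}$ in the role of the ``after-merge'' family $\ifam{G}$ of the theorem, separator $S$, and new items $x, y$. The hypothesis $Y \in \max\fpr{\ifam{F}} \cap \max\fpr{\ifam{G}}$ of the corollary coincides with the hypothesis needed by the theorem, so the theorem yields: the legal split operations using $Y$ are identical in $\ifam{F}$ and in $\ifam{G}$ unless $Y$ is the unique maximal itemset of $\ifam{G}$ containing $S \cup \set{x}$ or containing $S \cup \set{y}$. Finally, substituting $S \cup \set{x} = X - \set{y}$ and $S \cup \set{y} = X - \set{x}$ turns this exception clause into exactly the statement of the corollary. (The phrase ``during the merge'' in the statement should be read as ``during the split''; since the set of legal split operations using a fixed maximal itemset is a property of the family it occurs in, the direction in which the operation is applied is immaterial.)

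The only step that deserves real attention is the inverse-operation claim in the first paragraph: one must check, from the definitions of \textsc{Split} and \textsc{Merge} together with Theorem~\ref{thr:legalsplit}, that $\merge{X - \set{x,y}, x, y}$ adds back neither more nor fewer itemsets than $\splt{X, x, y}$ deleted, and that no maximal itemset other than $X$ itself is disturbed in the process. This is routine once the definitions are unwound, and everything that follows it is a direct substitution, so I do not anticipate any further difficulty.
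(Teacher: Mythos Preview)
Your proposal is correct and follows exactly the approach the paper intends: the paper does not give a separate proof of this corollary but simply remarks that $\splt{S \cup \set{x, y}, x, y}$ and $\merge{S, x, y}$ are opposite operations, so the update rules for splits are read off from Theorem~\ref{thr:splitupdate} with the roles of the two families swapped. Your write-up in fact supplies more detail than the paper itself on why the two operations are inverses, and the substitution $S \cup \set{x} = X - \set{y}$, $S \cup \set{y} = X - \set{x}$ is exactly what is needed.
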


\begin{corollary}[of Theorem~\ref{thr:mergeupdate}]
\label{cor:mergeupdate}
Let $\ifam{F}$ be a downward closed family of itemsets and let $\ifam{G}$ be
the family after performing $\splt{X, x, y}$. Let $S = X - \set{x, y}$.
Then the reduced families are updated as follows:
\begin{enumerate}
\setlength{\parskip}{0.0em}
\item Itemset containing $\set{x, y}$ in $\rf{\ifam{F}; S}$ is split into two parts,
$\link{\ifam{G}, S, x}$ and $\link{\ifam{G}, S, y}$.
\item Itemset $\set{x}$ is removed from $\rf{\ifam{G}; S \cup \set{y}}$.
Itemset $\set{y}$ is removed from $\rf{\ifam{G}; S \cup \set{x}}$.
\item Let $T \subsetneq S$ and let $z \in S - T$. Item $x$ is removed from
the itemset containing $z$ in $\rf{\ifam{F}; T \cup \set{y}}$.
Similarly, item $y$ is removed from the itemset containing $z$ in $\rf{\ifam{F}; T \cup \set{x}}$.
\item Otherwise, $\rf{\ifam{F}; T} = \rf{\ifam{G}; T}$ or $\mdgr{\ifam{F}; T} = 0$ and $\mdgr{\ifam{G}; T} = 0$.
\end{enumerate}
\end{corollary}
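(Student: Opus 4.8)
The plan is to reduce this corollary to Theorem~\ref{thr:mergeupdate} by exploiting the fact, remarked in the text, that a split and a merge are mutually inverse: if $\ifam{G}$ is obtained from $\ifam{F}$ by the legal operation $\splt{X, x, y}$ and we put $S = X - \set{x, y}$, then $\ifam{F}$ is obtained back from $\ifam{G}$ by $\merge{S, x, y}$. I would first establish this precisely. By Theorem~\ref{thr:legalsplit} the legality of $\splt{X, x, y}$ means $X$ is the unique maximal itemset of $\ifam{F}$ containing both $x$ and $y$, so the split deletes exactly the itemsets of $\ifam{F}$ that contain $\set{x, y}$ (all of which are sub-itemsets of $X$) and nothing else. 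Consequently $\ifam{G} \subseteq \ifam{F}$, no itemset of $\ifam{G}$ contains both $x$ and $y$, and $S \cup \set{x}, S \cup \set{y} \in \ifam{G}$. Using the decomposability of $\ifam{F}$ one checks that in $\rf{\ifam{G}; S}$ the items $x$ and $y$ lie in two different blocks, so by Theorem~\ref{thr:legalmerge} the merge $\merge{S, x, y}$ on $\ifam{G}$ is legal, and by definition it re-inserts $X = S \cup \set{x, y}$ together with all of its sub-itemsets, recovering $\ifam{F}$.

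With the inverse in hand, I would apply Theorem~\ref{thr:mergeupdate} to the ordered pair $(\ifam{G}, \ifam{F})$, reading $\ifam{G}$ as the ``before'' family and $\ifam{F}$ as the result of $\merge{S, x, y}$ applied to it. The theorem then describes the transition $\rf{\ifam{G}; \cdot} \to \rf{\ifam{F}; \cdot}$, and each of its four cases, read backwards, yields the corresponding case of the corollary. Case~1 becomes: the block of $\rf{\ifam{F}; S}$ containing $\set{x, y}$ (namely the merge of $\link{\ifam{G}, S, x}$ and $\link{\ifam{G}, S, y}$) is split back into $\link{\ifam{G}, S, x}$ and $\link{\ifam{G}, S, y}$. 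Case~2 becomes: $\set{x}$ is removed from $\rf{\ifam{G}; S \cup \set{y}}$ and $\set{y}$ is removed from $\rf{\ifam{G}; S \cup \set{x}}$. Case~3 becomes: for $T \subsetneq S$ and $z \in S - T$, item $x$ is removed from the block containing $z$ in $\rf{\ifam{F}; T \cup \set{y}}$, and symmetrically item $y$ is removed in $\rf{\ifam{F}; T \cup \set{x}}$. Case~4 (no change, or merge degree zero on both families) is self-inverse and carries over verbatim.

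I expect the main obstacle to be purely bookkeeping. One must keep straight which family is the ``before'' one at each step, since the macro $\link{\cdot}$ always refers to the pre-operation family; this is exactly why case~1 of the corollary is phrased in terms of $\link{\ifam{G}, \cdot, \cdot}$, and why ``the itemset containing $\set{x, y}$'' in $\rf{\ifam{F}; S}$ is well defined (case~1 of Theorem~\ref{thr:mergeupdate} guarantees that $x$ and $y$ do land in one block after the merge). The one genuinely non-routine ingredient is the verification in the first paragraph that the reverse merge is legal and reproduces $\ifam{F}$ exactly; this rests on the decomposability of $\ifam{F}$ together with Theorems~\ref{thr:legalsplit} and~\ref{thr:legalmerge}, but requires no computation.
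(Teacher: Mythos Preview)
Your proposal is correct and matches the paper's own approach: the paper does not give a standalone proof of this corollary but simply remarks, immediately before stating it, that $\splt{S \cup \set{x,y}, x, y}$ and $\merge{S, x, y}$ are opposite operations and that the split updates are therefore obtained by reversing Theorem~\ref{thr:mergeupdate}. Your write-up makes this explicit and supplies the check (via Theorems~\ref{thr:legalsplit} and~\ref{thr:legalmerge}) that the reverse merge is legal and recovers $\ifam{F}$, which the paper leaves implicit.
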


We keep in memory only those families that have
positive merge degree. Theorem~\ref{thr:reducednumber} tells us that there are
only $K - 1$ such families. By studying the code in the update algorithm we see
that, except in two cases, the update of a family is either a
insertion/deletion of an element into an itemset or a merge of two itemsets.
The first complex case is given on Line~\ref{alg:mergeside:complex1} in
\textsc{MergeSide} which corresponds to Case 2 in
Theorem~\ref{thr:mergeupdate}. The problem is that this family may have
contained only itemset before the merge, hence we did not store it. Consequently, we
need to recreate the missing itemset, and this is done in $O(\sum_{X \in
\ifam{F}} X)$ time.  The second case occurs on
Line~\ref{alg:mergeside:complex2} in \textsc{SplitSide}.  This corresponds to
the case where we need to break the itemset $W \in \rf{S}$ containing $x$ and
$y$ apart during a split (Case 1 in Corollary~\ref{cor:mergeupdate}). This is done by constructing the new sets from
scratch.  The construction needs $O(\abs{W}K\pr{\abs{W} + M})$ time, where $M$
is the size of largest itemset in $\ifam{F}$.

\begin{algorithm}[htb!]
	Update $\ifam{F}$\;
	Remove $\link{S, x}$ from $\rf{S}$\;
	$S \define X - \set{x, y}$\;
	$\textsc{SplitSide}(X, x, y, S)$; $\textsc{SplitSide}(X, y, x, S)$\;
\caption{$\textsc{SplitUpdate}(X, x, y)$. Routine for updating the structures during $\splt{X, x, y}$.}
\label{alg:splitupdate}
\end{algorithm}

\begin{algorithm}[htb!]
	$Z \define S \cup \set{a}$\;
	\While{changes}  {
		\nllabel{alg:mergeside:complex2}
		$Z \define Z \cup \set{X \in \max\fpr{\ifam{F}}; S \subsetneq Z \cap X}$\;
	}
	Add $Z$ into $\rf{S}$\;
	\If{$\rf{S \cup \set{a}}$ exists} {
		Remove $\set{b}$ from $\rf{S \cup \set{a}}$\;
	}
	\For{$T \subsetneq S$, $\rf{T \cup \set{a}}$ exists} {
		$z \define$ (any) item in $S - T$\;
		Remove $b$ from $\link{S, z}$\;
	}
	\If{there is unique $Z \in \max\fpr{\ifam{F}}$ s.t. $S \cup \set{a} \subsetneq Z$} {
		Update $\tb{Z}$\;
	}
\caption{Subroutine $\textsc{SplitSide}(X, a, b, S)$ used by \textsc{SplitUpdate}.}
\label{alg:splitside}
\end{algorithm}

\begin{algorithm}[htb!]
	Merge $\link{S, x}$ and $\link{S, y}$ in $\rf{S}$\;
	$A \define$ itemset in $\max\fpr{\ifam{F}}$ such that $S \cup \set{x} \subseteq A$\;
	$B \define$ itemset in $\max\fpr{\ifam{F}}$ such that $S \cup \set{y} \subseteq B$\;
	Build $\tb{S \cup \set{x, y}}$ from $\tb{A}$ and $\tb{B}$\; 
	$\textsc{MergeSide}(S, x, y)$; $\textsc{MergeSide}(S, y, x)$\;
	Update $\ifam{F}$\;
\caption{$\textsc{MergeUpdate}(S, x, y)$. Routine for updating the structures during $\merge{S, x, y}$.}
\label{alg:mergeupdate}
\end{algorithm}

\begin{algorithm}[htb!]
	$U \define S \cup \set{a}$\;
	\If{$U \notin \max\fpr{\ifam{F}}$ \AND $\rf{U}$ does not exists} {
		\nllabel{alg:mergeside:complex1}$\rf{U} \define \set{\bigcup_{U \subseteq X \in \max\fpr{\ifam{F}}}X}$\;
	}
	Add $b$ into $\rf{U}$\;
	\For{$T \subsetneq S$, $\rf{T \cup \set{a}}$ exists} {
		$z \define$ (any) item in $S - T$\;
		Augment $\link{S, z}$ with $b$\;
	}
	\If{there is unique $Z \in \max\fpr{\ifam{F}}$ s.t. $U \subsetneq Z$} {
		Update $\tb{Z}$\;
	}
\caption{Subroutine $\textsc{MergeSide}(S, a, b)$ used by \textsc{MergeUpdate}.}
\label{alg:mergeside}
\end{algorithm}

\section{Related Work}
\label{sec:related}
Many quality measures have been suggested for itemsets. A major part of these
measures are based on how much the itemset deviates from some null hypothesis. For
example, itemset measures that use the independence model as background
knowledge have been suggested in~\cite{aggarwal:98:new,brin:97:beyond}. More
flexible models have been proposed, such as, comparing itemsets against graphical
models~\cite{jaroszewicz:04:interestingness} and local Maximum Entropy
models~\cite{meo:00:theory,tatti:08:maximum}. In addition, mining itemsets with
low entropy has been suggested in~\cite{heikinheimo:07:finding}.

Our main theoretical advantage over these approaches is that we
look at the itemsets as a whole collection. For example, consider that we
discover that item $a$ and $b$ deviate greatly from the null hypothesis. Then
any itemset containing both $a$ and $b$ will also be deemed interesting. The
reason for this is that these methods are not adopting to the discovered fact that $a$ and $b$ are
correlated, but instead they continue to use the same null hypothesis.  We, on
the other hand, avoid this problem by considering models: if
itemset $ab$ is found interesting that information is added into the
statistical model. If this new model then explains bigger itemsets containing
$a$ and $b$, then we have no reason to add these itemsets, into the model, and hence such
itemsets will not be considered interesting.

The idea of mining a pattern set as a whole in order to reduce the number of
patterns is not new. For example, pattern reduction techniques based on minimum
description length principle has been
suggested~\cite{siebes:06:item,tatti:08:finding,heikinheimo:09:low-entropy}.
Discovering decomposable models have been studied in~\cite{tatti:08:decomposable}.
In addition, a framework that incrementally adopts to the patterns approved
by the user has been suggested in~\cite{hanhijarvi:09:tell}. Our main advantage
is that these methods require already discovered itemset collection as an input,
which can be substantially large for low thresholds. We, on the other hand,  skip this step and define
the significance for itemsets such that we can mine the patterns directly.

\section{Experiments}
\label{sec:experiments}
In this section we present our empirical evaluation of the measure.  We first
describe the datasets and the setup for the experiments, then present the
results with synthetic datasets and finally the results with real-world
datasets.

\subsection{Setup for the Experiments}
We used $2 \times 3$ synthetic datasets and $3$ real-world datasets.

The first three synthetic datasets, called \emph{Ind}, contained $15$
independent items and $100$, $10^3$, and $10^4$ transactions, respectively. We
set the frequency for the individual items to be $0.1$. The next three synthetic
datasets, called \emph{Path}, also contained $15$ items. In these datasets, an
item $a_i$ were generated from the previous one with $P(a_i = 1 \mid a_{i - 1}
= 1) = P(a_i = 0 \mid a_{i - 1} = 0) = 0.75$.  The probability of the first
item was set to $0.5$. We set the number of transactions for these datasets to
$100$, $10^3$, and $10^4$, respectively. 

Our first real-world dataset \emph{Paleo}\footnote{NOW public release 030717
available from~\cite{fortelius:06:spectral}.}  contains information of species
fossils found in specific paleontological sites in
Europe~\cite{fortelius:06:spectral}.  The dataset \emph{Courses} contains the
enrollment records of students taking courses at the Department of Computer
Science of the University of Helsinki. Finally, our last dataset is \emph{Dna}
is DNA copy number amplification data collection of human
neoplasms~\cite{myllykangas:06:dna}. We used $100$ first items from this data
and removed empty transactions. The basic characteristics of the datasets are
given in Table~\ref{tab:basic}.

For each data we sampled the models from the posterior distribution using
techniques described in Section~\ref{sec:decompose}. We used singleton model
as a starting point and did $5000$ restarts. The number of required MCMC steps
is hard to predictm since the structure of the state space of decomposable
models is complex. Further it also depends on the actual data. Hence, we settle for heuristic:
for each restart we perform $100
K \log K$ MCMC steps, where $K$ is the number of items. 
Doing so we obtained $N = 5000$ random models for each dataset.  The
execution times for sampling are given in Table~\ref{tab:basic}.  Let
$\enset{\ifam{F}_1}{\ifam{F}_N}$ be the discovered models.  We estimated the
itemset score $\score{X} \approx \abs{\set{\ifam{F}_i \mid X \in \ifam{F}_i}} /
N$ and mined interesting itemsets using a simple depth-first approach.

\begin{table}[htb!]
\centering
\begin{tabular}{lrrrr}
\toprule
Name & $\abs{D}$ & K & \# of steps & time \\
\midrule
\emph{Ind}     & $100$--$10^4$ & 15  & $4\,063$  & $4m$--$2h$\\
\emph{Path}    & $100$--$10^4$ & 15  & $4\,063$  & $7m$--$3.5h$\\
\emph{Dna}     & 1160          & 100 & $46\,052$ & $6h$\\
\emph{Paleo}   & 501           & 139 & $68\,590$ & $5h$\\
\emph{Courses} & 3506          & 90  & $40\,499$ & $8.5h$\\
\bottomrule
\end{tabular}
\caption{Basic characteristics of the datasets.
The fourth column contains
the number of sample steps and the last column is the execution time.}
\label{tab:basic}
\end{table}

\subsection{Synthetic datasets}
Our main purpose for the experiments with synthetic data\-sets is to demonstrate
how the score behaves as a function of number of data points.  To this end, we
plotted the number of significant itemsets, that is itemsets whose score was
higher than the threshold $\sigma$, as a function of the threshold $\sigma$.
The results are shown in
Figures~\ref{fig:ind_sig_vs_thresh}~and~\ref{fig:path_sig_vs_thresh}.
\begin{figure*}[htb!]
\centering
\subfigure[\label{fig:ind_sig_vs_thresh}\emph{Ind}]{\includegraphics{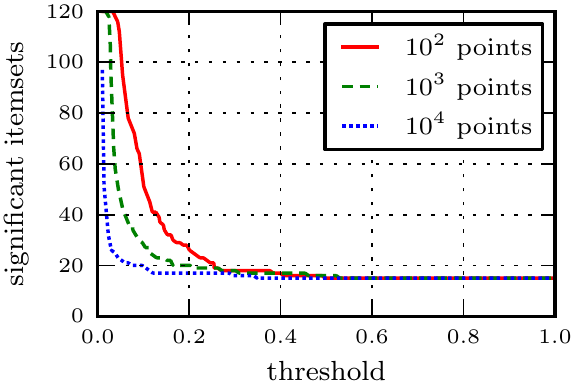}}%
\subfigure[\label{fig:path_sig_vs_thresh}\emph{Path}]{\includegraphics{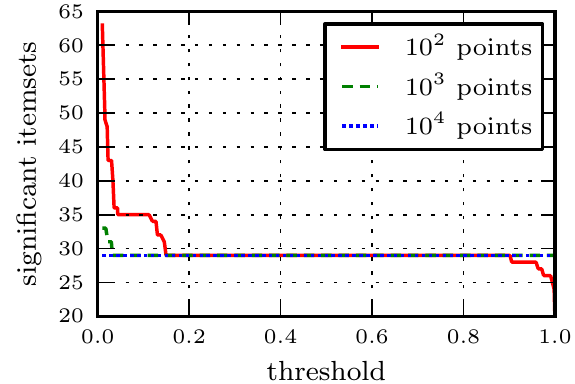}}%
\subfigure[\label{fig:real_sig_vs_thresh}Real datasets]{\includegraphics{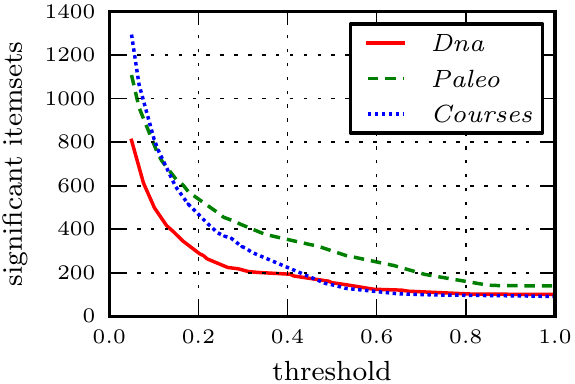}}
\caption{Number of significant itemsets as a function of the threshold.
The smallest threshold used for \emph{Ind} and \emph{Path} is $0.01$.
The smallest threshold used for real datasets is $0.05$.}
\end{figure*}

Ideally, for \emph{Ind}, the dataset with independent variables we should have
only $15$ significant itemsets, that is, the singletons, for any $\sigma > 0$. Similarly, for
\emph{Path} we should have $15 + 14 = 29$ itemsets, the singletons and the
pairs of form $a_ia_{i + 1}$. We can see from
Figures~\ref{fig:ind_sig_vs_thresh}~and~\ref{fig:path_sig_vs_thresh} that as we
increase the number of transactions in data, the number of significant itemsets
approaches these ideal cases, as predicted by Theorem~\ref{thr:occam}.  The
convergence to the ideal case is faster in \emph{Path} than in \emph{Ind}. The
reason for this can be explained by the curse of dimensionality. In \emph{Ind}
we have $15 \times 14 / 2 = 105$ combinations of pairs of items. There is a
high probability that some of these item pairs appear to be correlated. On the
other hand, for \emph{Path}, let us assume that we have the correct model. That
is, the singletons and the $14$ pairs $a_ia_{i + 1}$. The only valid itemsets
of size $3$ that we can add to this model are of the form $a_ia_{i + 1}a_{i + 2}$.
There are only $13$ of such sets, hence the probability of finding such itemset
important is much lower. Interestingly, in \emph{Path} we actually benefit from
the fact that we are using decomposable models instead of general exponential
models.

\subsection{Use cases with real-world datasets}

Our first experiment with real-world data is to study the
number of significant itemsets as a function of the threshold $\sigma$.
Figure~\ref{fig:real_sig_vs_thresh} shows the number of significant itemsets
for all three datasets. We see that the number of significant
itemsets increases faster than for the synthetic datasets as the threshold
decreases. The main reason for this difference, is that with real-world datasets we
have more items and less transactions. This is seen especially in the \emph{Paleo}
dataset for which the number of significant itemsets increases steeply between the
interval $0.4$--$1.0$ when compared to \emph{Dna} and \emph{Courses}.

Our next experiment is to compare the score against baselines, namely, the
frequency $\freq{X}$ and entropy $\ent{X}$. These comparisons are given in
Figures~\ref{fig:real_score_vs_freq}~and~\ref{fig:real_score_vs_entropy}.  In
addition, we computed the correlation coefficients (given in
Table~\ref{tab:corr}). From results, we see that $\score{X}$ has a positive
correlation with frequency and a negative correlation with entropy. The
correlation with entropy is expected, since low-entropy implies that the empirical
distribution of an itemset is different than the uniform distribution. Hence,
using the frequency of such an itemset should improve the model and
consequently the itemset is considered interesting.

\begin{figure*}[htb!]
\centering
\subfigure[score as a function of frequency\label{fig:real_score_vs_freq}]{\includegraphics[width=2.3in]{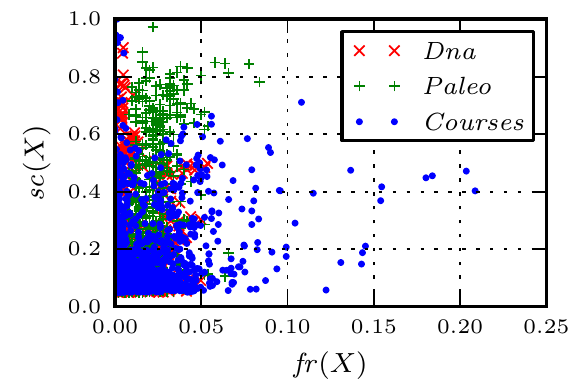}}%
\subfigure[score as a function of entropy\label{fig:real_score_vs_entropy}]{\includegraphics[width=2.3in]{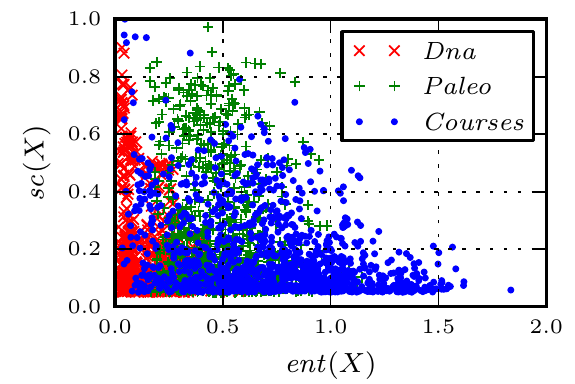}}%
\subfigure[index difference vs. score\label{fig:real_dist_vs_score}]{\includegraphics[width=2.3in]{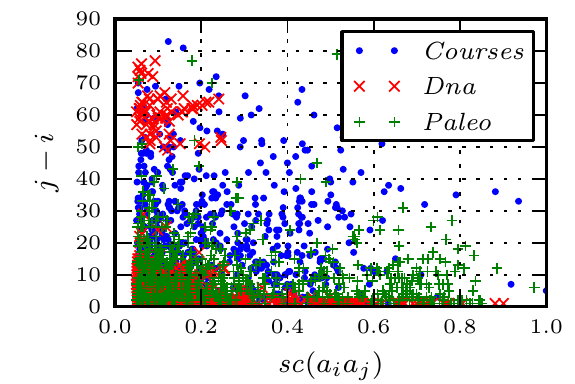}}

\caption{Score $\score{X}$ as a function of different baselines. In
Figures~\ref{fig:real_score_vs_freq}~and~\ref{fig:real_score_vs_entropy} score
is plotted as a function of frequency $\freq{X}$ and entropy $\ent{X}$,
respectively.  In Figure~\ref{fig:real_dist_vs_score} the difference between
the largest index and the smallest index of an item is plotted as a function of
$\score{X}$. Singletons and itemsets with $\score{X} < 0.05$ are ignored. Additionally, only itemsets of size $2$ are used
in Figure~\ref{fig:real_dist_vs_score}.}
\label{fig:real_score_vs_baseline}
\end{figure*}

\begin{table}[htb!]
\centering
\begin{tabular}{lrrr}
\toprule
Data & $\freq{X}$ & $\ent{X}$ & index diff. \\
\midrule
\emph{Dna}     & 0.16 & -0.27 & -0.28\\
\emph{Paleo}   & 0.45 & -0.16 & -0.17\\
\emph{Courses} & 0.18 & -0.35 & -0.02\\
\bottomrule
\end{tabular}
\caption{Correlations of $\score{X}$ and baselines.}
\label{tab:corr}
\end{table}

Both datasets \emph{Paleo} and \emph{Dna} have a natural order between the
items, for example, for \emph{Paleo} dataset it is the era when particular
species was extant.  Our next experiment is to test whether this order is represented
in the discovered patterns. In our experiments, the items in these datasets
were ordered using this natural order.
Let $X = a_ia_j$ be an itemset of size $2$.
In Figure~\ref{fig:real_dist_vs_score}, we plotted $j - i$ as a function of
$\score{X}$. We used only itemsets of size $2$, since larger itemsets have
inherently smaller scores and larger difference between the indices. In
addition, we compute the correlation coefficients (given in Table~\ref{tab:corr}).
From the results we see that for \emph{Paleo} and \emph{Dna} the more significant
itemsets tend to have a smaller difference between their indices. On the other
hand, we do not discover any significant correlation in \emph{Courses}.

Finally, we report some of the discovered patterns from \emph{Courses}.
The $4$ most significant itemsets of size $2$ are (\emph{Computer
Architectures}, \emph{Performance Analysis}) with a score of $0.95$, (\emph{Design \&
Analysis of Algorithms}, \emph{Principles of Functional Programming}) scoring
$0.94$, (\emph{Database Systems II}, \emph{Information Storage}) scoring
$0.94$, (\emph{Three concepts: probability}, \emph{Machine Learning}) scoring
$0.92$. 

\section{Conclusions}
\label{sec:conclusions}
In this paper we introduced a novel and general approach for ranking itemsets.
The idea behind the approach is to connect statistical models and collections
of itemsets. This connection enables us to define the score of an itemset as the
probability of itemset occurring in a random model. Doing so, we transformed
the problem of mining patterns into a more classical problem of modeling.

As a concrete example of the framework, we used exponential models.
These models have many important theoretical and practical properties. The connection with itemsets is
natural and the Occam's razor inherent to these models can be used 
against the pattern explosion problem. Our experiments support the theoretical results
and demonstrate that the measure works in practice.

\section{Acknowledgments}
Nikolaj Tatti is funded by FWO postdoctoral mandate.

\bibliography{bibliography}
\appendix
\section{Proofs for Section~\lowercase{\ref{sec:model}}}

\begin{proof}[of Theorem~\ref{thr:itemsetsderive}]
We will prove the theorem using the well-known connection between maximum
entropy principle and exponential models.
Let $q_D$ be the empirical distribution of the data.
Let $\mathcal{P}$ be the collection
of distributions,
\[
	\mathcal{P} = \set{p \mid p(X = 1) = \freq{X} = q_D(X = 1), X \in \ifam{F}},
\]
that is, a distribution in $\mathcal{P}$ has the same itemset frequencies for
$\ifam{F}$ as the data. 
Assume for the time being that all entries in $q_D$ are positive. Then, A
well-known theorem~\cite{csiszar:75:i-divergence} states that the unique
distribution, say $p^*$, maximizing the entropy among $\mathcal{P}$ belongs to
the exponential model $M$ such that $\fm{M} = \ifam{F}$. Let the corresponding
parameters for the model be $r^*$.

We need to show that $p^*$ is actually a maximum likelihood distribution. To see this, let
$q \in M$ be another distribution from the model and let $r$ be its parameters.
A straightforward calculus reveals that
\[
\begin{split}
	\log \frac{p^*\fpr{D}}{q\fpr{D}} & = \sum_{t \in D} \log \frac{p^*(t)}{q(t)} \\
	                                 & = \abs{D}\sum_{t \in \set{0, 1}^K} q_D(t)\log \frac{p^*(t)}{q(t)} \\
	                                 & = \abs{D}\sum_{t \in \set{0, 1}^K} \sum_{X \in \ifam{F}} q_D(X = 1) (r^*_X - r_X) \\
	                                 & = \abs{D}\sum_{t \in \set{0, 1}^K} \sum_{X \in \ifam{F}} p^*(X = 1) (r^*_X - r_X) \\
	                                 & = \abs{D}\sum_{t \in \set{0, 1}^K} p^*(t)\log \frac{p^*(t)}{q(t)} \\
	                                 & = \abs{D}\kl{p^* \| q} \geq 0,
\end{split}
\]
where $\kl{p \| q}$ is the Kullback-Leibler divergence between $p^*$ and $q$.
The inequality shows that $r^*$ has the highest likelihood.  If $q_D$ has zero
probabilities, then we can consider a sequence $\epsilon_n \to 0$ as $n \to
\infty$. Define $q_n = (1 - \epsilon_n)q_D + \epsilon_n 2^{-K}$. Let $p^*_n$ be
the maximal entropy distribution computed from $q_n$. It is easy to show that a
limit distribution of any converging subsequence of $p^*_n$ has the
maximal entropy among $\mathcal{P}$. Since maximal entropy distribution is unique,
the sequence $p^*_n$ converges to the unique value, say $p^*$. Following the same
calculus as above we can show that $p^*$ is the maximal likelihood distribution
\[
\begin{split}
&\sum_{t \in \set{0, 1}^K} q_D(t)\log \frac{p^*(t)}{q(t)}   \\
	                                 &\qquad = \lim_{n \to \infty} \sum_{t \in \set{0, 1}^K} q_n(t)\log \frac{p^*_n(t)}{q(t)}  \\
	                                 &\qquad = \lim_{n \to \infty} \sum_{t \in \set{0, 1}^K} \sum_{X \in \ifam{F}} q_n(X = 1) (r^*_X - r_X) \\
	                                 &\qquad = \lim_{n \to \infty} \sum_{t \in \set{0, 1}^K} \sum_{X \in \ifam{F}} p^*_n(X = 1) (r^*_X - r_X) \\
	                                 &\qquad = \lim_{n \to \infty} \sum_{t \in \set{0, 1}^K} p^*_n(t)\log \frac{p^*_n(t)}{q(t)} \\
	                                 &\qquad = \sum_{t \in \set{0, 1}^K} p^*(t)\log \frac{p^*(t)}{q(t)} = \kl{p^* \| q} \geq 0.
\end{split}
\]
Finally, as the number of transactions goes into infinity $r^*$ converges to
the true parameters of the model and $p^*$ converges into true underlying
distribution $p$.
\end{proof}

To prove Theorem~\ref{thr:occam} we will need the following lemma whose
prove can be found for example in~\cite{wald:43:tests}.

\begin{lemma}
\label{lem:x2}
Assume that $D$ with $N$ data points comes from a model $M$ with parameters
$r$. Let $r^N$ be the maximal likelihood estimate. Then
\[
	\log P(D \mid M, r^N) - \log P(D \mid M, r)
\]
converges uniformly to $\chi^2$ distribution with $d \leq \abs{F} - 1$ number
degrees of freedom.
\end{lemma}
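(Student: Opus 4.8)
The plan is to prove this as the classical asymptotics of the log-likelihood ratio (Wilks' theorem, in the uniform form due to Wald), via a second-order Taylor expansion of the log-likelihood about the true parameter $r$.

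First I would fix notation. Write the exponential model $M$ in its natural parameterization, with $\ell_N(s) = \log P(D \mid M, s) = \sum_{t \in D}\log P(t \mid M, s)$; the free parameters are $(s_X)_{X \in \ifam{F},\, X \neq \emptyset}$, while $s_\emptyset$ is the log-normalization. Since $M$ is an exponential family with sufficient statistics $S_X$, the function $\ell_N$ is smooth and concave in $s$, and its Hessian is $-N$ times an empirical covariance of the $S_X$'s. Let $\mathcal{I} = \operatorname{Cov}_r(S)$ be the per-sample Fisher information at $r$. Two standard limit theorems then do the work: (i) since the data truly comes from $P(\cdot \mid M, r)$, the i.i.d.\ score vectors $\nabla_s \log P(t \mid M, r)$ have mean $0$, so by the multivariate central limit theorem $N^{-1/2}\nabla \ell_N(r) \xrightarrow{d} \mathcal{N}(0, \mathcal{I})$; and (ii) by the law of large numbers $-N^{-1}\nabla^2 \ell_N(s) \xrightarrow{p} \mathcal{I}$ uniformly for $s$ near $r$, which — together with consistency of the maximum likelihood estimate $r^N \to r$ (itself a consequence of the strict concavity of $\ell_N$ and the convergence of empirical statistic means) — justifies the expansion below.

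Next I would combine these. From the stationarity condition $\nabla \ell_N(r^N) = 0$ and a one-step Taylor expansion, $\sqrt{N}(r^N - r) = \mathcal{I}^{-1} N^{-1/2}\nabla \ell_N(r) + o_p(1) \xrightarrow{d} \mathcal{N}(0, \mathcal{I}^{-1})$. Expanding $\ell_N$ itself to second order about $r$ and substituting, the first- and second-order terms combine to give
\[
  2\bigl(\ell_N(r^N) - \ell_N(r)\bigr) = W_N^{\mathsf{T}} \mathcal{I}\, W_N + o_p(1), \qquad W_N := \sqrt{N}(r^N - r),
\]
so that $2\bigl(\log P(D \mid M, r^N) - \log P(D \mid M, r)\bigr)$ (i.e., the stated difference up to the usual factor of $2$) converges in distribution to $W^{\mathsf{T}}\mathcal{I} W$ with $W \sim \mathcal{N}(0, \mathcal{I}^{-1})$, which is $\chi^2_d$ with $d = \operatorname{rank} \mathcal{I} \leq \abs{\ifam{F}} - 1$. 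The rank equals $\abs{\ifam{F}} - 1$ in the regular interior case; it can be strictly smaller when $\mathcal{I}$ is degenerate or the true distribution lies on the boundary of the closed model (the limiting distributions allowed in the definition of $M$), which is precisely why the statement only claims $d \le \abs{\ifam{F}} - 1$.

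I expect the main obstacle — and the reason it is cleaner to cite \cite{wald:43:tests} than to reprove it here — to be making every step uniform over the relevant parameter set: one needs a uniform law of large numbers (stochastic equicontinuity) for $N^{-1}\nabla^2 \ell_N$ and uniform control of the Taylor remainder, which is exactly Wald's strengthening of the basic argument. A secondary technical point is the boundary/degeneracy case, where $q_D$ or the true $p$ puts zero mass on some cells: there one must pass to the closure of the exponential family and argue that the number of identifiable free parameters, hence the degrees of freedom $d$, drops accordingly — the source of the inequality in $d \le \abs{\ifam{F}} - 1$. Apart from these, the argument is the textbook quadratic expansion above.
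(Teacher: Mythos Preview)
The paper does not prove this lemma at all: it states the result and defers entirely to the citation \cite{wald:43:tests}. Your proposal is therefore not comparable to a paper proof but is instead a self-contained sketch of the classical Wilks/Wald argument that the citation points to. As such it is correct in outline: the quadratic Taylor expansion of $\ell_N$ about $r$, combined with the CLT for the score and the LLN for the observed information, yields the $\chi^2$ limit with degrees of freedom equal to the rank of the Fisher information, and your discussion of uniformity and the boundary/degenerate case correctly identifies why one only gets $d \le \abs{\ifam{F}} - 1$. You also rightly flag the missing factor of $2$ in the lemma's statement. In short, there is no discrepancy in approach to report --- you have supplied a proof where the paper supplies only a reference.
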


\begin{proof}[of Theorem~\ref{thr:occam}]

Let $M' \neq M$ be a model and $\ifam{F}' = \fm{M'}$. We need to show that
\[
	\frac{P(M' \mid D)}{P(M \mid D)} \to 0
\]
as the number of data points goes to
infinity.

We will use the BIC estimate given in Eq.~\ref{eq:bic} to prove the theorem.
Let $N = \abs{D}$. Let us define $p^N$ to be the maximal likelihood distribution
from $M$ and $q^N$ to be the maximal likelihood distribution in $M'$.
Let us write 
\[
	A_N = \log q^N(D) - \log p^N(D)
\]
and
\[
	B_N = \log N (\abs{\ifam{F}} - \abs{\ifam{F}'})/2.
\]
Then asymptotically
\[
	\log P(M' \mid D) - \log P(M \mid D) = A_N + B_N.
\]
Let $A$ be the limit of $A_N$ and $B$ be the limit of $B_N$.

Assume that $p \notin M'$. Let $q \in M'$ be the maximum likelihood
distribution as $N \to \infty$. Since the exponential models are closed sets
$q$ cannot be $p$.  A straightforward calculation reveals that $N^{-1}A_N \to -
\kl{p \| q} < 0$. In addition, $N^{-1}B_N \to 0$.  Thus $N^{-1}(A_N + B_N)$
approaches a negative value, and so $A + B = -\infty$.

Assume that $p \in M'$, then the conditions imply that $\abs{\ifam{F}'} >
\abs{\ifam{F}}$ so that $B = -\infty$.  Lemma~\ref{lem:x2} implies that $A$ is
a difference of two $\chi^2$ distributions. More importantly, since convergence
in Lemma~\ref{lem:x2} is uniform we can compute the limit inside the
probability so that the probability 
\[
	P(A_N + B_N > \sigma) \to P(A + B > \sigma) = P(A > \infty) = 0
\]
for any $\sigma \in \real$.  Hence $A_N + B_N$
approaches $-\infty$. This proves the theorem.
\end{proof}

\begin{proof}[of Theorem~\ref{thr:decomposelikelihood}]
Let $p(A) = P(A \mid M, r)$ be the distribution from a model $M$ with parameters $r$.
To ease the notation, let us define $\ifam{G} = \max\fpr{\ifam{F}}$ and 
\[
	\ifam{S} = \set{X \cap Y \mid (X, Y) \in E(\tree{T})}.
\]
A classic result states that we can decompose $p$ to factors,
\[
	p(A = t) = \frac{\prod_{X \in \ifam{G}} p(X = t_X)}{\prod_{S \in \ifam{S}} p(S = t_S)},
\]
where $t_X$ is a projection of a binary vector $t$ to the variables of $X$.
Let $q(A)$ be the empirical distribution. Let $p$ be the maximum-likelihood
distribution.  The proof of Theorem~\ref{thr:itemsetsderive} states that $p(X =
1) = q(X = 1)$ for any itemset $X \in \ifam{F}$. Using the exclusion-inclusion
rules and the fact that $\ifam{F}$ is downward closed we can show that $p(X =
t) = q_D(X = t)$ for every $X \in \ifam{F}$ and every possible binary vector
$t$ of length $\abs{X}$.  The log-likelihood is equal to
\[
\begin{split}
	\log \prod_{t \in D} p(A = t) = & \sum_{t \in D} \log \frac{\prod_{X \in \ifam{G}} p(X = t_X)}{\prod_{S \in \ifam{S}} p(S = t_S)} \\
	                              = & \sum_{X \in \ifam{G}} \sum_{t \in D} \log p(X = t_X)  \\
								    & \  - \sum_{S \in \ifam{S}} \sum_{t \in D} \log p(S = t_{S})) \\
	                              = & \abs{D}\sum_{X \in \ifam{G}} \sum_{t_X} q(X = t_X) \log p(X = t_X)  \\
								    & \  - \abs{D}\sum_{S \in \ifam{S}} \sum_{t_S} q(S = t_S) \log p(S = t_{S})) \\
	                              = & \abs{D}\sum_{X \in \ifam{G}} \sum_{t_X} q(X = t_X) \log q(X = t_X)  \\
								    & \  - \abs{D}\sum_{S \in \ifam{S}} \sum_{t_S} q(S = t_S) \log q(S = t_{S})) \\
	                              = & -\abs{D}\sum_{X \in \ifam{G}} \ent{X}  + \abs{D}\sum_{S \in \ifam{S}} \ent{S}. \\
\end{split}
\]
This proves the theorem.
\end{proof}

\section{Proofs for Section~\lowercase{\ref{sec:algorithm}}}

We will need the following technical lemma for several subsequent proofs.

\begin{lemma}
\label{lem:cycle}
Let $X_1, \ldots, X_N \in \max\fpr{\ifam{F}}$ with $N \geq 3$. Define $S_i =
X_i \cap X_{i + 1}$ and $S_N = X_N \cap X_1$. If each $S_i$ has a unique item
(when compared to other $S_i$), then $\ifam{F}$ is not decomposable.
\end{lemma}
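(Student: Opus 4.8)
The plan is a proof by contradiction. Suppose $\ifam{F}$ is decomposable and fix a junction tree $\tree{T}$ whose node set is $\max\fpr{\ifam{F}}$. As the statement tacitly intends, I take the $X_i$ to be pairwise distinct, and I read ``each $S_i$ has a unique item (when compared to other $S_i$)'' as supplying, for each $i$, an item $a_i\in S_i=X_i\cap X_{i+1}$ that occurs in $X_i$ and in $X_{i+1}$ but in no other maximal itemset of $\ifam{F}$; the $a_i$ are then automatically pairwise distinct.

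The crux is the claim that $X_i$ and $X_{i+1}$ are neighbours in $\tree{T}$ for every $i$. Since $a_i$ is a common item of the maximal itemsets $X_i$ and $X_{i+1}$, the defining property of a junction tree forces the unique path in $\tree{T}$ joining $X_i$ to $X_{i+1}$ to visit only maximal itemsets containing $a_i$. But the only such itemsets are $X_i$ and $X_{i+1}$ themselves, so that path has no interior vertex; that is, $\set{X_i,X_{i+1}}\in E(\tree{T})$. Doing this for $i=1,\dots,N$ shows that every consecutive pair of the cyclic sequence $X_1,X_2,\dots,X_N$ is an edge of $\tree{T}$, so these $N\ge 3$ distinct itemsets form a cycle in $\tree{T}$ — impossible, since $\tree{T}$ is a tree. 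Hence $\ifam{F}$ has no junction tree and is not decomposable.

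The heart of the argument (``a shared item forces adjacency in the junction tree'') is immediate from the running intersection property once each $a_i$ is known to lie in no third maximal itemset, so the delicate point is really the hypothesis itself. Under the weaker literal reading ``$a_i\notin S_j$ for $j\neq i$'' the statement would be insufficient: some further maximal itemset $Z$ containing $a_i$ could make the $\tree{T}$-path from $X_i$ to $X_{i+1}$ run through $Z$ instead of being the single edge, so the cycle need not embed into $\tree{T}$. It is the parenthetical ``compared to other $S_i$'' — read as each $a_i$ being private to the pair $X_i,X_{i+1}$ — that licenses the one-line adjacency argument; everything else (reducing to distinct $X_i$, checking the $N$ edges genuinely close up into a cycle) is routine bookkeeping.
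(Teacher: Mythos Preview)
The paper states Lemma~\ref{lem:cycle} without proof, so there is nothing to compare your argument against directly; I can only assess your proposal on its own terms.

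Your adjacency-and-cycle argument is clean and correct \emph{under the hypothesis you adopt}, namely that each $a_i\in S_i$ lies in $X_i$ and $X_{i+1}$ and in no other maximal itemset of $\ifam{F}$. The problem is that this is not what the lemma says. The parenthetical ``when compared to other $S_i$'' is a restriction of the comparison to the sets $S_1,\dots,S_N$, not to all of $\max\fpr{\ifam{F}}$; it asserts only that $a_i\notin S_j$ for $j\neq i$. Your last paragraph reinterprets this as ``$a_i$ is private to the pair $X_i,X_{i+1}$'', but that is a genuine strengthening of the stated hypothesis, not a reading of it.

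Under the literal (weak) hypothesis the lemma is in fact false, so the gap is not merely in your proof but in the statement. Take six items $a,b,c,1,2,3$ and let $\max\fpr{\ifam{F}}=\{X_1,X_2,X_3,M\}$ with
\[
X_1=\{a,c,1\},\quad X_2=\{a,b,2\},\quad X_3=\{b,c,3\},\quad M=\{a,b,c\}.
\]
Then $S_1=\{a\}$, $S_2=\{b\}$, $S_3=\{c\}$ are pairwise disjoint, so each $S_i$ certainly has an item not in any other $S_j$. Yet the star with $M$ at the centre and $X_1,X_2,X_3$ as leaves is a valid junction tree (each of $a,b,c$ lies in $M$ and in exactly two leaves, and $1,2,3$ are isolated), so $\ifam{F}$ \emph{is} decomposable. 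This is exactly the obstruction you describe informally: a further maximal itemset absorbs the would-be cycle.

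In short, you have correctly diagnosed that the literal hypothesis is too weak for your argument, but you have not proved the lemma as stated---because, as stated, it is not true. Note also that the paper's applications of the lemma (in the proofs of Theorems~\ref{thr:legalsplit} and~\ref{thr:reduceseparate}) only verify the weak condition, so your strengthened hypothesis, while making your proof go through, does not obviously match what those applications supply.
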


\begin{proof}[of Theorem~\ref{thr:legalsplit}]
Let us assume that $X$ is the only itemset containing $x$ and $y$ simultaneously.  Let
$\tree{T}$ be a junction tree. The adjacent itemsets of $X$ either miss $x$ or
$y$ or both. Thus we can remove $X$ and replace it $X - x$ and $X - y$,
connected to each other, the adjacent itemsets of $X$ can be connected to
either $X - x$ or to $X - y$. If $X - x$ is not maximal, that is there is an
itemset $Z \supset X - x$ , then we can remove $X - x$ and connect all the
edges going to $X - x$ to $Z$. We can repeat this for $X - y$ as well.  The
resulting tree is a junction tree.

To prove the other direction, assume that there is an another itemset $Y$
containing $x$ and $y$, simultaneously. There must be an item $z \in X$ such
that $z \notin Y$.  Then itemsets $Y$, $X - x$, and $X - y$ satisfy the
requirements of Lemma~\ref{lem:cycle}, hence the new family is not
decomposable.
\end{proof}

\begin{proof}[of Theorem~\ref{thr:legalmerge}]
Let $\tree{T}$ be a junction tree.  Assume that $V$ and $W$ exist. By
construction of $\rf{\ifam{F}; S}$ it follows that there must be two itemsets
$X, Y \in \ifam{F}$ such that $S = X \cap Y$, $x \in X$, and $y \in Y$.  If $X$
and $Y$ are adjacent in $\tree{T}$, we can add $Z$ between $X$ and $Y$
(possibly removing $X$ and $Y$ if they are no longer maximal), and $\tree{T}$
still remains a junction tree.

If $X$ and $Y$ are not adjacent, then there is a path $P$ connecting them.
There must be itemsets $P_i$ and $P_{i + 1}$ along that path such that $P_i
\cap P_{i + 1} = S$, otherwise $X$ and $Y$ would have been merged during the
construction of reduced family, which is a contradiction. If we remove edge
$(P_i, P_{i + 1})$ and add edge $(X, Y)$ we will obtain an alternative junction
tree for $\ifam{F}$. It is straightforward to see that this tree is truly a
junction tree. But in this tree $X$ and $Y$ are adjacent, so we can add $Z$
between them.

To prove the other direction, assume that adding $Z$ is a legal merge.  By
definition it immediately implies that there are sets $V, W \in \rf{\ifam{F};
S}$ such that $x \in V$ and $y \in W$. We need to show that $V \neq W$.  Let
$\ifam{H}$ be the family resulted from the merge. Let $\tree{U}$ be a junction
tree for $\ifam{H}$. Let $X \in \max\fpr{\ifam{F}}$ be an itemset such that $S
\cup \set{x} \subseteq X$ and, similarly, let $Y \in \max\fpr{\ifam{F}}$ such
that $S \cup \set{y} \subseteq Y$. We may assume that $X \neq S \cup \set{x}$
and $Y \neq S \cup \set{y}$, otherwise the proof is trivial. Hence, $X, Y \in
\max\fpr{\ifam{H}}$
Let $P$ be a path in $\tree{U}$ from $X$ to $Y$.  Let $R$ be a path in
$\tree{U}$ from $Z$ to $X$ and let $P_e$ be the first entry in path $R$ that
also occurs in $P$. We must have $Z \cap X \subseteq P_e$. The path from $Z$ to
$Y$ must also contain $P_e$, and so $Z \cap Y \subseteq P_e$. This implies that
$Z = (Z \cap X) \cup (Z \cap Y) \subseteq P_e$.  Since $Z$ is a maximal set, we
must have $Z = P_e$.

Itemset $Z$ is the only maximal itemset in $\ifam{H}$ that contains $x$ and $y$
simultaneously. Otherwise, Theorem~\ref{thr:legalsplit} states that $\splt{Z,
x, y}$ is illegal in $\ifam{H}$ so $\merge{S, x, y}$ is illegal in
$\ifam{F}$.

We can remove $Z = P_e$ from $\tree{U}$, attach $P_{e - 1}$ and $P_{e + 1}$ to
each other, and connect the rest adjacent nodes either to $P_{e - 1}$ or to
$P_{e + 1}$ depending whether these nodes have $x$ or $y$ as a member.  The
outcome, say $\tree{R}$, is a junction tree for $\ifam{F}$.

If $V = W$, then there must be a sequence 
\[
	X = \enpr{T_1}{T_N} = Y \in \ifam{F}
\]
such that $S \subsetneq T_i \cap T_{i + 1}$. Now consider a path $O$
in $\tree{R}$ visiting each $T_i$ in turn. Since $\tree{R}$ is a junction tree
we must have $S \subsetneq O_i \cap O_{i + 1}$. Since $X = T_1$ and $Y = T_N$
path $O$ must use the edge $(P_{e - 1}, P_{e + 1})$. But we must have $P_{e -
1} \cap P_{e + 1} = S$, otherwise $\tree{U}$ would not have been a junction
tree. This contradiction implies that $V \neq W$.
\end{proof}

\begin{proof}[of Theorem~\ref{thr:ergodic}]
We will prove the theorem by showing that all models can be brought by legal
splits to the indepenence model. Since a split $\splt{X, x, y}$ and a merge
$\merge{X - \set{x, y}, x, y}$, are the opposite operations we can reach model
$M$ from $M'$ by first reaching the independence model with splits and then
reaching $M$ from the independence model by merges.

Let $\tree{T}$ be a junction tree for $\ifam{F}' = \fm{M'}$ and let
$X \in \max\fpr{\ifam{F}'}$ be a maximal itemset 
such that $X$ is the leaf clique in a junction tree $\tree{T}$ with
$\abs{X} \geq 2$.  If no such $X$ exist, then $M'$ is in fact the independence
model.  Let $Y$ be the maximal itemset into which $X$ is connected in $\tree{T}$.

There must be an item $x \in X$ such that $x$ is not contained in any other
maximal itemset of $\ifam{F}'$. To prove this, assume otherwise. Then for each
item $z$ there is a maximal itemset $Z$ that contains $z$. The path from $X$ to
$X$ must go through $Y$ so the running intersection property implies that $x
\in Y$. But this implies that $X \subseteq Y$ which contradicts the maximality
of $X$. Let $y \in X$ such that $x \neq y$. Theorem~\ref{thr:legalsplit} now
states that $\splt{X, x, y}$ is a legal operation. We can now iteratively apply
this step until we reach the independence model. This proves the theorem.
\end{proof}

\begin{proof}[of Theorem~\ref{thr:ratio}]
For notational convinience, let us write
\[
	a(M) = \sum_{(X, Y) \atop \in E(\tree{T})} \ent{X \cap Y} \text{ and } b(M) = \sum_{X \in \max\fpr{\ifam{F}}} \ent{X}.
\]
Theorem~\ref{thr:decomposelikelihood} and Eq.~\ref{eq:bic} implies that
\[
	\log A = \abs{D}\fpr{a(M) - b(M)} - \frac{\log \abs{D}(\abs{\ifam{F}} - 1)}{2}
\]
and
\[
	\log B = \abs{D}\fpr{a(M') - b(M')} - \frac{\log \abs{D}(\abs{\ifam{F}'} - 1)}{2}.
\]
Let us write $d = \abs{D}(a(M') - b(M') - a(M) + b(M))$.  We will first show that $d =
\gain{X, x, y}$.  Let us denote $S = X - \set{x, y}$, $U = X - x$, and $V = X -
y$.  Consider the possibility that $U$ is not maximal in $\ifam{F}'$, in such
case there must be a maximal itemset $Q \in \ifam{F}$ such that $U \subset Q$
and that $Q$ is adjacent to $X$ with a separator $U$.  Similarly, if $V$ is not
maximal in $\ifam{F}'$, then there is a maximal itemset $P \in \ifam{F}$ such
that $V \subset P$ and that $P$ is adjacent to $X$ with a separator $V$.  By
studying the first part of the proof of Theorem~\ref{thr:legalsplit} we see
that there are four possible cases depending whether $U$ and/or $V$ is maximal
set in $\ifam{F}'$. The split operations are 

\begin{align*}
\begin{minipage}[b]{2cm}\hfill\begin{tikzpicture}[>=latex',line join=bevel,scale=0.23]\pgfsetlinewidth{0.5bp}\pgfsetcolor{black}
\begin{scope}
  \definecolor{strokecol}{rgb}{0.0,0.0,0.0};
  \pgfsetstrokecolor{strokecol}
  \draw (27bp,27bp) ellipse (27bp and 27bp);
  \draw (27bp,27bp) node {$X$};
\end{scope}
\end{tikzpicture}\end{minipage} & \begin{minipage}{0.7cm}\centering$\longrightarrow$\vspace{0.8em}\end{minipage}  \begin{minipage}[b]{2cm}\begin{tikzpicture}[>=latex',line join=bevel,scale=0.23]\pgfsetlinewidth{0.5bp}\pgfsetcolor{black}
  \draw [] (55.6bp,27bp) .. controls (74.962bp,27bp) and (100.72bp,27bp)  .. (119.96bp,27bp);
  \definecolor{strokecol}{rgb}{0.0,0.0,0.0};
  \pgfsetstrokecolor{strokecol}
  \draw (88bp,45.5bp) node {$S$};
\begin{scope}
  \definecolor{strokecol}{rgb}{0.0,0.0,0.0};
  \pgfsetstrokecolor{strokecol}
  \draw (28bp,27bp) ellipse (27bp and 28bp);
  \draw (28bp,27bp) node {$U$};
\end{scope}
\begin{scope}
  \definecolor{strokecol}{rgb}{0.0,0.0,0.0};
  \pgfsetstrokecolor{strokecol}
  \draw (147bp,27bp) ellipse (27bp and 27bp);
  \draw (147bp,27bp) node {$V$};
\end{scope}
\end{tikzpicture}\end{minipage} \tag*{\raisebox{0.5em}{(Case 1)}}\\
\begin{minipage}[b]{2cm}\hfill\begin{tikzpicture}[>=latex',line join=bevel,scale=0.23]\pgfsetlinewidth{0.5bp}\pgfsetcolor{black}
  \draw [] (54.136bp,27bp) .. controls (73.443bp,27bp) and (99.22bp,27bp)  .. (118.22bp,27bp);
  \definecolor{strokecol}{rgb}{0.0,0.0,0.0};
  \pgfsetstrokecolor{strokecol}
  \draw (86bp,45.5bp) node {$V$};
\begin{scope}
  \definecolor{strokecol}{rgb}{0.0,0.0,0.0};
  \pgfsetstrokecolor{strokecol}
  \draw (27bp,27bp) ellipse (27bp and 27bp);
  \draw (27bp,27bp) node {$X$};
\end{scope}
\begin{scope}
  \definecolor{strokecol}{rgb}{0.0,0.0,0.0};
  \pgfsetstrokecolor{strokecol}
  \draw (144bp,27bp) ellipse (25bp and 25bp);
  \draw (144bp,27bp) node {$P$};
\end{scope}
\end{tikzpicture}\end{minipage} & \begin{minipage}{0.7cm}\centering$\longrightarrow$\vspace{0.8em}\end{minipage}  \begin{minipage}[b]{2cm}\begin{tikzpicture}[>=latex',line join=bevel,scale=0.23]\pgfsetlinewidth{0.5bp}\pgfsetcolor{black}
  \draw [] (55.666bp,27bp) .. controls (75.148bp,27bp) and (101.05bp,27bp)  .. (120.12bp,27bp);
  \definecolor{strokecol}{rgb}{0.0,0.0,0.0};
  \pgfsetstrokecolor{strokecol}
  \draw (88bp,45.5bp) node {$S$};
\begin{scope}
  \definecolor{strokecol}{rgb}{0.0,0.0,0.0};
  \pgfsetstrokecolor{strokecol}
  \draw (28bp,27bp) ellipse (27bp and 28bp);
  \draw (28bp,27bp) node {$U$};
\end{scope}
\begin{scope}
  \definecolor{strokecol}{rgb}{0.0,0.0,0.0};
  \pgfsetstrokecolor{strokecol}
  \draw (146bp,27bp) ellipse (25bp and 25bp);
  \draw (146bp,27bp) node {$P$};
\end{scope}
\end{tikzpicture}\end{minipage} \tag*{\raisebox{0.5em}{(Case 2)}}\\
\begin{minipage}[b]{2cm}\hfill\begin{tikzpicture}[>=latex',line join=bevel,scale=0.23]\pgfsetlinewidth{0.5bp}\pgfsetcolor{black}
  \draw [] (55.76bp,27bp) .. controls (75.575bp,27bp) and (102.1bp,27bp)  .. (121.79bp,27bp);
  \definecolor{strokecol}{rgb}{0.0,0.0,0.0};
  \pgfsetstrokecolor{strokecol}
  \draw (89bp,45.5bp) node {$U$};
\begin{scope}
  \definecolor{strokecol}{rgb}{0.0,0.0,0.0};
  \pgfsetstrokecolor{strokecol}
  \draw (28bp,27bp) ellipse (27bp and 28bp);
  \draw (28bp,27bp) node {$Q$};
\end{scope}
\begin{scope}
  \definecolor{strokecol}{rgb}{0.0,0.0,0.0};
  \pgfsetstrokecolor{strokecol}
  \draw (149bp,27bp) ellipse (27bp and 27bp);
  \draw (149bp,27bp) node {$X$};
\end{scope}
\end{tikzpicture}\end{minipage} & \begin{minipage}{0.7cm}\centering$\longrightarrow$\vspace{0.8em}\end{minipage}  \begin{minipage}[b]{2cm}\begin{tikzpicture}[>=latex',line join=bevel,scale=0.23]\pgfsetlinewidth{0.5bp}\pgfsetcolor{black}
  \draw [] (55.6bp,27bp) .. controls (74.962bp,27bp) and (100.72bp,27bp)  .. (119.96bp,27bp);
  \definecolor{strokecol}{rgb}{0.0,0.0,0.0};
  \pgfsetstrokecolor{strokecol}
  \draw (88bp,45.5bp) node {$S$};
\begin{scope}
  \definecolor{strokecol}{rgb}{0.0,0.0,0.0};
  \pgfsetstrokecolor{strokecol}
  \draw (28bp,27bp) ellipse (27bp and 28bp);
  \draw (28bp,27bp) node {$Q$};
\end{scope}
\begin{scope}
  \definecolor{strokecol}{rgb}{0.0,0.0,0.0};
  \pgfsetstrokecolor{strokecol}
  \draw (147bp,27bp) ellipse (27bp and 27bp);
  \draw (147bp,27bp) node {$V$};
\end{scope}
\end{tikzpicture}\end{minipage} \tag*{\raisebox{0.5em}{(Case 3)}}\\
\begin{minipage}[b]{3cm}\hfill\begin{tikzpicture}[>=latex',line join=bevel,scale=0.23]\pgfsetlinewidth{0.5bp}\pgfsetcolor{black}
  \draw [] (55.76bp,27bp) .. controls (75.575bp,27bp) and (102.1bp,27bp)  .. (121.79bp,27bp);
  \definecolor{strokecol}{rgb}{0.0,0.0,0.0};
  \pgfsetstrokecolor{strokecol}
  \draw (89bp,45.5bp) node {$U$};
  \draw [] (176.14bp,27bp) .. controls (195.44bp,27bp) and (221.22bp,27bp)  .. (240.22bp,27bp);
  \draw (208bp,45.5bp) node {$V$};
\begin{scope}
  \definecolor{strokecol}{rgb}{0.0,0.0,0.0};
  \pgfsetstrokecolor{strokecol}
  \draw (28bp,27bp) ellipse (27bp and 28bp);
  \draw (28bp,27bp) node {$Q$};
\end{scope}
\begin{scope}
  \definecolor{strokecol}{rgb}{0.0,0.0,0.0};
  \pgfsetstrokecolor{strokecol}
  \draw (266bp,27bp) ellipse (25bp and 25bp);
  \draw (266bp,27bp) node {$P$};
\end{scope}
\begin{scope}
  \definecolor{strokecol}{rgb}{0.0,0.0,0.0};
  \pgfsetstrokecolor{strokecol}
  \draw (149bp,27bp) ellipse (27bp and 27bp);
  \draw (149bp,27bp) node {$X$};
\end{scope}
\end{tikzpicture}\end{minipage} & \begin{minipage}{0.7cm}\centering$\longrightarrow$\vspace{0.8em}\end{minipage}  \begin{minipage}[b]{2cm}\begin{tikzpicture}[>=latex',line join=bevel,scale=0.23]\pgfsetlinewidth{0.5bp}\pgfsetcolor{black}
  \draw [] (55.666bp,27bp) .. controls (75.148bp,27bp) and (101.05bp,27bp)  .. (120.12bp,27bp);
  \definecolor{strokecol}{rgb}{0.0,0.0,0.0};
  \pgfsetstrokecolor{strokecol}
  \draw (88bp,45.5bp) node {$S$};
\begin{scope}
  \definecolor{strokecol}{rgb}{0.0,0.0,0.0};
  \pgfsetstrokecolor{strokecol}
  \draw (28bp,27bp) ellipse (27bp and 28bp);
  \draw (28bp,27bp) node {$Q$};
\end{scope}
\begin{scope}
  \definecolor{strokecol}{rgb}{0.0,0.0,0.0};
  \pgfsetstrokecolor{strokecol}
  \draw (146bp,27bp) ellipse (25bp and 25bp);
  \draw (146bp,27bp) node {$P$};
\end{scope}
\end{tikzpicture}\end{minipage} \tag*{\raisebox{0.5em}{(Case 4)}}\\
\end{align*}

In all cases, the difference $d$ is equal to $\ent{X} + \ent{S} - \ent{V} -
\ent{U} = \gain{X, x, y}$.  For example, in Case 2, the term $\ent{X}$ is in
$b(M)$ but not in $b(M')$, similarly the term $\ent{U}$ is in $b(M')$ but not
in $b(M)$, consequently $b(M) - b(M') = \ent{X} - \ent{U}$.  Similarly, $a(M) -
a(M') = \ent{V} - \ent{S}$. Thus $d = \gain{X, x, y}$ for Case 2, and similar
observations show that $d = \gain{X, x, y}$ holds also for other cases.

To complete the proof we need to show that $\abs{\ifam{F}} - \abs{\ifam{F}'} =
2^{\abs{X} - 2}$.  During a split, itemsets containing $x$ and $y$ are removed,
but there are exactly $2^{\abs{X} - 2}$ such itemsets. We can now combine these
results
\[
\begin{split}
	\log B - \log A & = d - \frac{\log \abs{D}}{2}(\abs{\ifam{F}} - \abs{\ifam{F}'}) \\
	                & = \gain{X, x, y} - \log \abs{D}2^{\abs{X} - 3}, \\
\end{split}
\]
which proves the theorem.
\end{proof}

\begin{proof}[of Theorem~\ref{thr:reduceseparate}]
Let $X_i$ and $Y_i$ be the maximal itemsets such that $X_i \cap Y_i = S_i$,
$x_i \in X_i$ and $y_i \in Y_i$. These sets must exist since adding $Z_i$ is a
legal merge.

Assume that $Z_1 = Z_2$. This implies that either all items $x_1$, $y_1$,
$x_2$, and $y_2$ are unique or two of them are the same. Assume the latter case
and assume that $x_1 = x_2$. Then we must have $y_2 \in S_1 - S_2$ and $y_1 \in
S_2 - S_1$. We have $x_1 \in X_1 \cap X_2$, $y_1 \in X_2 \cap Y_1$, and $y_2
\in Y_1 \cap X_1$.  Hence the conditions in Lemma~\ref{lem:cycle} hold and
$\ifam{F}$ is not decomposable.

Assume now that all four items are unique. This implies that $x_1, y_1 \in S_2
- S_1$ and $y_2 \in S_1 - S_2$. Again we have $x_1 \in X_1 \cap X_2$, $y_1
\in X_2 \cap Y_1$, and $y_2 \in Y_1 \cap X_1$ so Lemma~\ref{lem:cycle} implies
that $\ifam{F}$ is not decomposable.
\end{proof}

\begin{proof}[of Theorem~\ref{thr:reducednumber}]
Let $S$ be an itemset. We will show that $\mdgr{M, S} > 0$ only if $S$ is a
separator, that is, $S = X \cap Y$ for two adjacent itemsets in a junction
tree.  The theorem will follow from this, since a junction tree contains at most
$K$ nodes and hence at most $K - 1$ edges.

To prove the result, assume that $\mdgr{M, S} > 0$. This implies that there are
two sets, say $U, V \in \rf{\ifam{F}, S}$. Let $x \in U$ and $y \in V$.  This
implies that there are (at least) two sets $X$ and $Y$ such that $S \cup
\set{x} \subset X$ and $S \cup \set{y} \subset Y$. Let $P = \enpr{P_1}{P_L}$ be
a path from $X$ to $Y$ in the junction tree. The running intersection property
implies that $S \subseteq P_i \cap P_{i + 1}$ If there is no $i$ such that $S =
P_i\cap P_{i + 1}$, then $X$ and $Y$ should have been joined together during
the construction of $\rf{\ifam{F}, S}$. In other words, there must be a set in
$\rf{\ifam{F}, S}$ containing $x$ and $y$. Hence, there are adjacent itemsets
$P_i$ and $P_{i + 1}$ such that $S = P_i\cap P_{i + 1}$. This proves the
theorem.
\end{proof}

\begin{proof}[of Theorem~\ref{thr:splitupdate}]
Adding $X = S \cup \set{x, y}$ into $\ifam{F}$ can only make splits illegal.
Assume that $\splt{Y, a, b}$ becomes illegal after adding $X$.
Theorem~\ref{thr:legalsplit} implies that $a, b \in X$.  We must have $\set{a,
b} \not\subseteq S$ since otherwise the split would not be legal in the
original family. Assume that $a = x$ and let $W \in \max\fpr{\ifam{F}}$ is the
maximal itemset containing $\set{x} \cup S$ (such itemset exists because of the definition of merge). Note that, $b \neq y$, so $b \in S$ and
$\set{a, b} \in W$.  If there is another maximal
itemset, say $Z \in \max\fpr{\ifam{F}}$, such that $\set{x, b} = \set{a, b} \subset Z$,
then the split $\splt{Y, x, b}$ is not legal in the original family. This
proves the theorem.
\end{proof}

\begin{proof}[of Theorem~\ref{thr:mergeupdate}]
Let $Z = S \cup \set{x, y}$. This itemset is the only maximal itemset in
$\ifam{G}$ that contains $x$ and $y$ simultaneously. The definition of reduced
family now implies immediately Claims 2 and 3. 

To prove Claim 1 let $V, W \in \rf{\ifam{F}; S}$ such that $x \in V$ and $y \in
W$. These must be separate sets because of Theorem~\ref{thr:reduceseparate}. The set $Z$ now
connects these sets into one in $\rf{\ifam{G}; S}$.

To see Claim 4 let $T$ be an itemset. If $x, y \in T$, then 
$\rf{\ifam{G}; T}$ has only one set and $\rf{ifam{G}; T}$ has zero sets.
If $x \in T$ and $y \notin T$, then $T - S \neq \emptyset$,
otherwise $T$ is covered by Claim 2 or 3. In such case, the construction of
$\rf{\ifam{G}; T}$ does not use $Z$, and hence remains unchanged. Assume that
$x, y \notin T$.  Let $X, Y \in \max\fpr{\ifam{F}}$ be the itemsets such that $S \cup
\set{x} \subseteq X$ and $S \cup \set{y} \subseteq Y$. If $T - S \neq
\emptyset$, then $Z$ is not used. Assume now that $S \subsetneq T$. In
this case the items of $X - T$  and $Y - T$ are already joined into one
itemset, hence $\rf{\ifam{G}; T}$ remains unchanged.
\end{proof}

\end{document}